\newtheorem{theorem}{Theorem}
\newtheorem{corollary}{Corollary}
\newtheorem{proposition}{Proposition}
\newtheorem{definition}{Definition}
\newtheorem{lemma}{Lemma}
\def \x {{\mathbf{x}}}
\def\alphab {{\boldsymbol{\alpha}}}
\DeclareMathOperator*{\argmax}{arg\,max}
\begin{document}

\title{Temporal Parallelization of Inference in Hidden Markov Models}

\author{Syeda Sakira Hassan, \IEEEmembership{Member, IEEE,} Simo S\"arkk\"a, \IEEEmembership{Senior Member, IEEE,} and \'Angel F. Garc\'ia-Fern\'andez %
\thanks{S. Hassan and S. Särkkä are with the Department of Electrical Engineering and
Automation, Aalto University, 02150 Espoo, Finland (emails: {syeda.s.hassan}@aalto.fi,
{simo.sarkka}@aalto.fi).} %
\thanks{\'Angel F. Garc\'ia-Fern\'andez is with the Department of Electrical Engineering and Electronics, University of Liverpool, Liverpool L69 3GJ, United Kingdom (email: angel.garcia-fernandez@liverpool.ac.uk). He is also with  the ARIES Research Centre, Universidad Antonio de Nebrija,  Madrid, Spain.} 
\thanks{Digital Object Identifier 10.1109/TSP.2021.3103338}}

\markboth{}
{Shell \MakeLowercase{\textit{et al.}}: Bare Demo of IEEEtran.cls for IEEE Journals}
\maketitle

\begin{abstract}
This paper presents algorithms for the parallelization of inference in hidden Markov models (HMMs). In particular,  we propose a parallel forward-backward type of filtering and smoothing algorithm as well as a parallel Viterbi-type  maximum-a-posteriori (MAP) algorithm. We define associative elements and operators to pose these inference problems as all-prefix-sums computations and parallelize them using the parallel-scan algorithm. The advantage of the proposed algorithms is that they are computationally efficient in HMM inference problems with long time horizons. We empirically compare the performance of the proposed methods to classical methods on a highly parallel graphics processing unit (GPU).
\end{abstract}

\begin{IEEEkeywords}
Parallel forward-backward algorithm, parallel sum-product algorithm, parallel max-product algorithm, parallel Viterbi algorithm.
\end{IEEEkeywords}

\IEEEpeerreviewmaketitle

\section{Introduction}

\IEEEPARstart{H}{idden} Markov models (HMMs) have gained a lot of attention due to their simplicity and a broad range of applications \cite{rabiner1989tutorial, cappe2006inference, song2013maxproduct, ueng2013trellis, mor2020systematic}. Successful real-world application areas of HMMs include speech recognition, convolutional code decoding, target tracking and localization, facial expression
recognition, gene prediction, gesture recognition, musical
composition, and bioinformatics~\cite{rabiner1989tutorial, juang1991hidden, salah2007hidden, krogh2001predicting, pan2015viterbiDecoder, yang2018viterbiDecoding, long2011probabilistic}. An HMM is a statistical model that provides a simple and flexible framework that can be used to express the conditional independence and joint distributions using graph-like structures. An HMM can be thought of as a specific form of a probabilistic graphical model consisting of two components: a structural component that defines the \textit{edges} and a parametric component that encodes \textit{potentials} associated with the edges in the graph. A graphical representation of an HMM is shown in Fig.~\ref{fig:hmm}. An HMM is a doubly stochastic process, where the underlying stochastic process (light gray-colored nodes) can be only observed through another stochastic process (dark gray-colored nodes)~\cite{rabiner1989tutorial, viesti2019}. A primary task in graphical models is to perform inference, that is, to compute marginals or the most likely values of the unobserved modes given the observed modes.

\begin{figure}
\centerline{\includegraphics[width=\columnwidth]{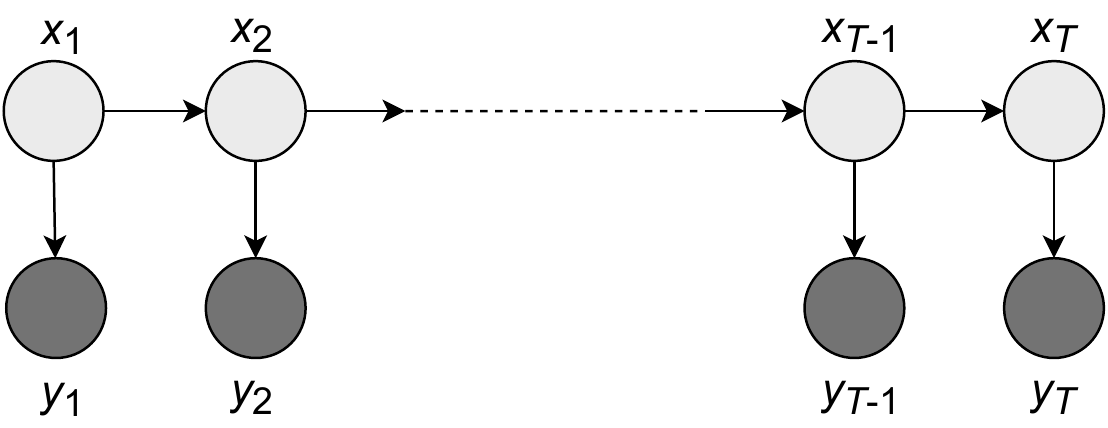}}
\caption{An HMM diagram. Observed nodes are shaded in dark gray, whereas unobserved nodes are shaded in light gray.}
\label{fig:hmm}
\end{figure}

If the HMM model has $D$ states and the length of the sequence is $T$, then there are $D^T$ possible state sequences. For the forward-backward algorithm, whose objective is to find the marginal distributions, and for the Viterbi algorithm, whose objective is to find the most likely sequence (the Viterbi path), the time complexity is $O(D^2T)$~\cite{viterbi1967error,rabiner1989tutorial, juang1991hidden}. Some methods to speed up these algorithms via parallelization have appeared in literature. For instance, in order to speed up the inference task, Lifshits et al.~\cite{lifshits2009speeding} used a compression scheme, which utilizes the repetitive patterns in the observed sequences. Sand et al.~\cite{sand2010hmmlib} developed a parallel forward algorithm using single instruction multiple data (SIMD) processors and multiple cores. Nielsen and Sand~\cite{nielsen2011algorithms} presented a parallel reduction on the forward algorithm. Chatterjee and Russell~\cite{chatterjee2012temporally} used the temporal abstraction concept from dynamic programming to speed up the Viterbi algorithm. Using accelerated hardware, a tile-based Viterbi algorithm was proposed by Zhihui et al.~\cite{zhihui2010TileVB}, where the matrix multiplication was done in parallel. Maleki et al.~\cite{maleki2014parallelizing} proposed an optimized method that is able to solve a particular class of dynamic programming problems by tropical semirings. They showed that it can be used to optimize the Viterbi algorithm. Nevertheless, parallelization has not been fully investigated in HMM inference tasks. In this paper, we develop novel parallel versions of the forward-backward algorithm and the Viterbi algorithm that have $O(\log T)$ span complexities.

 Parallel algorithms can take advantage of the computational power of specialized hardware accelerators, such as general purpose \textit{graphics processing units} (GPUs)~\cite{owens2008gpu}, \textit{neural processing units} (NPUs)~\cite{esmaeilzadeh2012npu}, or \textit{tensor processing units} (TPUs)~\cite{jouppi2017datacenter}. These accelerators allow us to perform the parallel computation on a large amount of data, making computation faster and
cheaper and, therefore, economically viable. Among these accelerators, GPUs are the most widely used alternatives. GPU architectures enable us to harness the massive parallelization of general-purpose algorithms~\cite{luebke2008cuda}. Table~\ref{tab:previous_work} summarizes relevant works on the HMM inference tasks, which were implemented on GPUs. However, these works were optimized particularly for speech recognition tasks and did not explore the full capabilities of parallelism.
In order to utilize parallelism, sequential algorithms need to be reformulated in terms of primitive operations that enable us to perform the parallel execution on parallel platforms. For example, the parallel-scan algorithm~\cite{Blelloch:1989, Blelloch:1990} can be used to run computations in parallel provided that they can be formulated in terms of binary associative operators.

\begin{table}[!ht]
    \centering
    \caption{Previous works on the HMM inference task using GPUs. The notation `-' means that the value was not mentioned in the referenced article. }
    \begin{tabular}{l|c|c|c}
         Algorithm & States & Observations & Speed improvement \\
         \hline
         Forward-backward~\cite{li2009fast} & 8       & 200        & 3.5x  \\
         Forward~\cite{liu2009cuhmm}            & 512   & 3-10       & 880x\\
         Baum-Welch~\cite{liu2009cuhmm}         & 512   & 3-10       & 180x\\
         Viterbi~\cite{zhang2009implementation} &   -    &2000-3000   & 3x\\
         Forward~\cite{hymel2011massively}      & 4000  &      1000      & 4x\\
         Baum-Welch~\cite{hymel2011massively}   & 4000  &     1000     & 65x
    \end{tabular}
    \label{tab:previous_work}
\end{table}

Recently, the parallel-scan algorithm has been used in parallel Bayesian filtering and smoothing algorithms to speed up the sequential computations via parallelization~\cite{sarkka2020temporal, yaghoobi2021parallel}. Although this framework is applicable to HMMs as well, the difference to our proposed framework is that here we formulate the inference problem in terms of potentials. This results in a different backward pass to compute the posterior marginals which corresponds to a two-filter smoother formulation, whereas the formulation in Ref.~\cite{sarkka2020temporal,yaghoobi2021parallel} is a Rauch--Tung--Striebel type of smoother~\cite{sarkka2013bayesian}. In this article, we also present a parallel Viterbi-type maximum-a-posteriori (MAP) algorithm which has not been explored before.

The main contribution of this paper is to present a parallel framework to perform  the HMM inference tasks efficiently by using the parallel-scan algorithm. In particular, we formulate the sequential operations of sum-product and max-product algorithms as binary associative operations. This formulation allows us to construct parallel versions of the forward-backward algorithm and the Viterbi algorithm with $O(\log T)$ span complexities. For the latter algorithm, we propose two alternative parallelization approaches: a path-based and a forward-backward based formulation. We also empirically evaluate the computational speed advantage of these methods on a GPU platform.

The structure of the paper is the following. In Section~\ref{sec:problem_formulation}, we define the HMM inference problems using a probabilistic graph. Then, in Section~\ref{sec:sum-product}, we review the classical sum-product and forward-backward algorithms, introduce the elements and associative operations for parallel operations, and formulate the parallel-scan algorithm on these elements and operators. Next, in Section~\ref{sec:max-product}, we show that we can apply the parallel-scan algorithm to the max-product algorithm, which results in a parallel version of the Viterbi algorithm. In Section~\ref{sec:extensions}, we discuss extensions and generalizations of the methodology, and in Section~\ref{sec:experimental_results}, we present experimental results for both parallel sum-product and parallel max-product based algorithms. Section~\ref{sec:conclusion} concludes the article.

\section{Problem Formulation}
\label{sec:problem_formulation}
Assume that we have $T$ discrete random variables $\x = (x_1, \ldots, x_T)$ which correspond to nodes in a probabilistic graph and $N$ potential functions $\psi_1, \ldots, \psi_N$ which define the cliques of the graph \cite{Koller:2009, Rangan2017messagePassing, weiss2000correctness}. We also assume each variable $x_t$ takes values in the set $\mathcal{X} = \{ 1,\ldots,D \}$, where $D$ represents the number of states.  Each potential is a function of a subset of $x_t$'s, defined by multi-indices $\alphab_1, \ldots, \alphab_T$ with elements $\alphab_t = (\alpha_{t,1}, \ldots, \alpha_{t, |\alphab_t|})$. We denote the subset as $\x_{\alphab_t}$. The joint distribution of the random variables has the representation
\begin{equation}
    p(\x) = \frac{1}{Z} \prod_{t=1}^T \psi_t\left( \x_{\alphab_t} \right),
\label{eq:joint_dist}
\end{equation}
where $Z = \sum_{\x} \prod_{t=1}^T \psi_t\left( \x_{\alphab_t} \right)$ is the normalization constant, also known as the partition function~\cite{Koller:2009}. A typical inference task is the computation of all the marginals
\begin{equation}
    p(x_k) = \frac{1}{Z} \sum_{\x \backslash x_k} \prod_{t=1}^T \psi_t\left( \x_{\alphab_t} \right),
\label{eq:marginals}
\end{equation}
where $\x \backslash x_k = (x_1, \ldots, x_{k-1}, x_{k+1}, \ldots, x_T)$, that is, the summation is performed over all the variables except $x_k$. Another inference task is the computation of the similar quantity for the maximum
\begin{equation}
    p^*(x_k) = \max_{\x \backslash x_k} \prod_{t=1}^T \psi_t\left( \x_{\alphab_t} \right),
\label{eq:maxima}
\end{equation}
where the maximum is computed with respect to all variables in $\x$ except $x_k$.

Let us now consider inference problems in a hidden Markov model (HMM) of the form
\begin{subequations}
\begin{align}
    x_k &\sim p(x_k \mid x_{k-1}), \\
    y_k &\sim p(y_k \mid x_k),
    \label{eq:ssm}
\end{align}
\end{subequations}
where $\sim$ stands for ``is distributed as". Here, we assume that the sequence $x_1, \ldots, x_T$ is Markovian with transition probabilities $p(x_k \mid x_{k-1})$, and the observations $y_k$ are conditionally independent given $x_k$ with likelihoods $p(y_k \mid x_k)$. Furthermore, we have a prior $x_1 \sim p(x_1)$. We are interested in computing the smoothing posterior distributions $p(x_k \mid y_1, \ldots, y_T)$ for all $k = 1, \ldots, T$ as well as in computing the MAP estimate (i.e., the Viterbi path) of $p(\x)$. These can be computed using sequential algorithms in a linear computational time \cite{viterbi1967error, rabiner1986introduction,sarkka2013bayesian}. However, our aim is to reduce this computational time by using parallelization in the temporal domain.

We can express the inference problem in Eq.~\eqref{eq:joint_dist} by defining
\begin{subequations}
\begin{align}
    \psi_1 (x_1) &= p(y_1 \mid x_1) \, p(x_1), \\
    \psi_k(x_{k-1}, x_{k}) &= p(y_k \mid x_k) \, p(x_k \mid x_{k-1}), \quad \text{ for } k > 1.
\end{align}
\end{subequations}
With these definitions the joint distribution in Eq.~\eqref{eq:joint_dist} takes the form
\begin{equation}
    p(\x) = \frac{1}{Z} \psi_1 (x_1) \prod_{t=2}^T \psi_t(x_{t-1}, x_{t}).
\label{eq:joint_dist_markov}
\end{equation}
In this Markovian case, each potential is a function of the neighboring nodes $x_{k-1}, x_k$, that is, we have $\alphab_k = (k-1,k)$ for $k > 1$ and $\alphab_1 = (1)$. 

Typically, the inference task in the HMM is either to compute the marginals $p(x_k \mid y_1, \ldots, y_T)$ or to find the MAP sequence $x^*_{1:T}$, the Viterbi path. In the potential function formulation, the smoothing distribution is given by Eq.~\eqref{eq:marginals} and the maximum of Eq.~\eqref{eq:maxima} provides the Viterbi path at step $k$. Both of these correspond to the following kinds of general operations on Eq.~\eqref{eq:joint_dist_markov}:
\begin{equation}
    F(x_k) = \frac{1}{Z}\, \mathrm{OP}_k \left(\psi_1 (x_1) \prod_{t=2}^T \psi_t(x_{t-1}, x_{t})\right),  k = 1, \ldots, T,
\label{eq:general_operation}
\end{equation}
where $\mathrm{OP}_k(.)$ is a sequence of operations applied to all elements but $x_k$, such as $\sum_{\x \backslash x_k}$ or $\max_{\x \backslash x_k}$, resulting in a function of $x_k$. One way to compute these operations in Eq.~\eqref{eq:general_operation} is to use sum-product or max-product algorithms~\cite{pearl1988probabilistic, shafer1990probabilityPropagation, Bishop:2006}, for summation and maximization operations, respectively. However, in the following section we show that, since the summation and the maximum operations are associative, we can use parallel-scan algorithm for parallelizing these computations. The same principle would also apply to any other binary associative operation, but here we specifically concentrate on these two operations.

\section{Parallel sum-product algorithm for HMMs}
\label{sec:sum-product}

In this section, we present a parallel formulation of the sum-product algorithm. In Section~\ref{subsec:classical_sum_product_algorithm}, we review the classical sum-product algorithm. In Section~\ref{subsec:parallel_scan_algorithm}, we revisit the parallel-scan algorithm. In Section~\ref{subsec:sum_product_associative}, we show the decomposition of the sum-product algorithm in terms of the binary associative operations. Finally, in Section~\ref{sec:par_sum_product}, we propose the parallel sum-product algorithm with $O(\log T)$ span complexity.

\subsection{Classical sum-product algorithm}
\label{subsec:classical_sum_product_algorithm}

The sum-product algorithm  can be used to find the marginal distributions of all variables in a graph~\cite{cowell2006probabilistic, frey1998graphical}. We first define the forward potential as
\begin{equation}
\begin{aligned}
\psi_{1, k}^f(x_k) &= \sum_{x_{1:k-1}} \psi_1(x_1) \prod_{t=2}^k \psi_{t-1, t}(x_{t-1}, x_{t})  \\
\end{aligned}
\label{eq:definition_f_potentials}
\end{equation}
and the backward potential as
\begin{equation}
\begin{aligned}
\psi_{k, T}^b(x_k) &= \sum_{x_{k+1:T}} \prod_{t=k}^T \, \psi_{t, t+1}(x_{t}, x_{t+1}).
\end{aligned}
\label{eq:definition_b_potentials}
\end{equation}
Here, we have defined $\psi_{T, T+1}(x_T, x_{T+1}) = 1$ and denoted $\sum_{x_{1:k}} = \sum_{x_1} \ldots \sum_{x_k}$.  We can now express the marginal distribution $p(x_k)$, which corresponds to summation operation in Eq.~\eqref{eq:general_operation}, as a normalized product of the forward and backward potentials
\begin{equation}
\begin{aligned}
    p(x_k) 
    &= \frac{1}{Z_k} \, \psi_{1,k}^f(x_k) \, \psi_{k,T}^b(x_k),
\end{aligned}
\label{eq:smoothing_dist_extended}
\end{equation}
where $Z_k = \sum_{x_k} \psi_{1,k}^f(x_k) \, \psi_{k,T}^b(x_k)$. If we want to compute all the marginals $p(x_{1})$, $p(x_{2}), \ldots, p(x_T)$, then we need to compute all the terms $\psi_{1,k}^f(x_k)$ and $\psi_{k,T}^b(x_k)$ and combine them. It turns out that we can compute these forward and backward potentials in $O(D^2 \, T)$ steps using Algorithm~\ref{alg:fb_algorithm}, which is an instance of a sum-product algorithm.

\begin{algorithm}[H]
\caption{The classical sum-product algorithm for computing the forward and backward potentials.}
\label{alg:fb_algorithm}
\begin{algorithmic}[1]
  \REQUIRE The potentials $\psi_k(\cdot)$ for $k=1,\ldots,T$.
  \ENSURE The forward and backward potentials $\psi_{1,k}^f(x_k)$ and $\psi_{k,T}^b(x_k)$ for $k=1,\ldots,T$.
  \STATE // Forward pass:
  \STATE {$\psi_{1,1}^f(x_1) = \psi_1(x_1)$}  \COMMENT{Initialization}
  \FOR[Sequentially]{$k \gets 2$ \textbf{to} $T$} 
  \STATE {$\psi_{1,k}^f(x_k)
      = \sum_{x_{k-1}} \psi_{1,k-1}^f(x_{k-1}) \, \psi_{k-1, k}(x_{k-1}, x_k)$}
  \ENDFOR
  \STATE // Backward pass:
  \STATE {$\psi_{T,T}^b(x_T) = 1$} \COMMENT{Initialization}
  \FOR[Sequentially]{$k \gets T-1$ \textbf{to} $1$} 
     \STATE $\psi_{k,T}^b(x_k) = \sum_{x_{k+1}} \psi_{k, k+1}(x_k, x_{k+1}) \,  \psi_{k+1,T}^b(x_{k+1})$
  \ENDFOR%
\end{algorithmic}
\end{algorithm}

It should be noted that the belief propagation algorithm ~\cite{pearl1988probabilistic}, operating on a Bayesian network, corresponds to the sum-product algorithm in a factor graph with similar factorization. The forward algorithm of the HMM model is equivalent to filtering, whereas the backward algorithm corresponds to the backward pass in two-filter smoothing~\cite{sarkka2013bayesian}.

\subsection{Parallel-scan algorithm}
\label{subsec:parallel_scan_algorithm}

In this section, we revisit the parallel-scan algorithm. 
The parallel-scan algorithm~\cite{Blelloch:1989, ladner1980parallel} is a general framework to compute the generalized all-prefix-sums of associative operators in parallel. It was originally designed for computing all-prefix-sums~\cite{ladner1980parallel} using the summation operator. Later, it was generalized to other associative operators~\cite{Blelloch:1989, Blelloch:1990} and became a fundamental building block for solving many sequential algorithms in parallel including sorting, linear programming, and graph algorithms. 

The generalized all-prefix-sums operation for an operator $\otimes$ is defined as follows.

\begin{definition}
For a sequence of $T$ elements $(a_1, a_2, \ldots, a_T)$ and a binary associative operator $\otimes$, the all-prefix-sums operation computes the following sequence of length $T$:
\begin{align}
&(a_1, a_1 \otimes a_2,  \cdots, a_1 \otimes a_2 \otimes \ldots \otimes a_T).  
\end{align}
\label{def:all_prefix_sum}
\end{definition}
Similarly to the above, we can define the reversed all-prefix-sums as follows.
\begin{definition}
For a sequence of $T$ elements $(a_1, a_2, \ldots, a_T)$ and a binary associative operator $\otimes$, the reversed all-prefix-sums operation computes the following sequence of length $T$:
\begin{align}
&(a_1 \otimes a_2 \otimes \cdots \otimes a_T, \ldots, a_{T-1} \otimes a_T,  a_T).  
\end{align}
\label{def:rev_all_prefix_sum}
\end{definition}
The all-prefix-sums operation can be computed through the parallel-scan algorithm~\cite{Blelloch:1989, Blelloch:1990} in $O(\log T)$ span complexity. 
This algorithm computes only the non-reversed all-prefix-sums. However, it is also possible to compute the reversed all-prefix-sums. This can be achieved by reversing the inputs before performing the parallel-scan algorithm and reversing the outputs after the operation is performed. In addition to these, we also need to reverse the operation inside the algorithm. A pseudocode for the parallel-scan is provided by Algorithm~\ref{alg:parprefix}. 

\begin{algorithm}[H]
\begin{algorithmic}[1]
  \REQUIRE The elements $a_k$ for $k=1,\ldots,T$ and the operator $\otimes$.
  \ENSURE The prefix sums in $a_k$ for $k=1,\ldots,T$. 
 \STATE // Save the input:
  \FOR[Compute in parallel]{$i\gets1$ \textbf{to} $T$}
      \STATE $b_i \gets a_i$
  \ENDFOR
  \STATE // Up sweep:
  \FOR{$d\gets0$ \textbf{to} $\log_2 T - 1$} 
    \FOR[Compute in parallel]{$i\gets0$ \textbf{to} $T-1$ \textbf{by} $2^{d+1}$}
      \STATE $j \gets i + 2^d$
      \STATE $k \gets i + 2^{d+1}$
      \STATE $a_k \gets a_j \otimes a_k$
    \ENDFOR
  \ENDFOR
  \STATE $a_T \leftarrow 0$ \COMMENT{Here, $0$ is actually the neutral element for $\otimes$}
  \STATE // Down sweep:
  \FOR{$d\gets\log_2 T - 1$ \textbf{to} $0$}
    \FOR[Compute in parallel]{$i\gets0$ \textbf{to} $T-1$ \textbf{by} $2^{d+1}$}
      \STATE $j \gets i + 2^d$
      \STATE $k \gets i + 2^{d+1}$
      \STATE $t \gets a_j$
      \STATE $a_j \gets a_k$
      \STATE $a_k \gets a_k \otimes t$
    \ENDFOR
  \ENDFOR
  \STATE // Final pass:
  \FOR[Compute in parallel]{$i\gets1$ \textbf{to} $T$}
       \STATE $a_i \gets a_i \otimes b_i$
  \ENDFOR
\end{algorithmic}
\caption{The parallel-scan algorithm for in-place transformation of the sequence $(a_k)$ into its all-prefix-sums in $O(\log T)$ span complexity. Note that the algorithm in this form assumes that $T$ is a power of $2$, but it can easily be generalized to an arbitrary $T$.}
\label{alg:parprefix}
\end{algorithm}

The fundamental idea of the parallel-scan algorithm is to reorder the computations by using the associative property of the operator so that the resulting independent subproblems can be computed in parallel. The parallel-scan algorithm consists of two parts: \textit{up-sweep} and \textit{down-sweep}. The algorithm can be thought of as two traversals in a balanced binary tree. The first pass is the up-sweep that starts from leaves and ends at the root. The second pass is the down-sweep in reverse direction, which starts from the root and ends at leaves. Then, an additional pass is used to form the final result. Because a binary tree with $T$ leaves has the depth of $\log T$, the span complexity of the algorithm is $O(\log T)$. Our aim is to define the associative operators and elements corresponding to the sum-product and max-product algorithms and use the parallel-scan framework to parallelize the computations.

\subsection{Sum-product algorithm in terms of associative operations}
\label{subsec:sum_product_associative}

We can formulate the sum-product algorithm in a more abstract form by defining a general element $a_{i:k}$ and considering a binary associative operator $\otimes$ such that\cite{Blelloch:1990}
\begin{equation}
\begin{aligned}
a_{i:k} &= a_{i:j} \otimes a_{j:k}, \,\text{for}\,\, i < j < k. \\
\end{aligned}
\end{equation}
As it is shown in the following, the computation of the forward and backward terms reduces to computing $a_{0:k}$ and $a_{k:T+1}$.

\begin{definition}
\label{def:filtering_smoothing_operator}
We define an element $a_{i:k}$ recursively as follows. We have
\begin{equation}
\begin{aligned}
   a_{0:1} &= \psi_1(x_1),\\
   a_{k-1:k} &= \psi_k(x_{k-1}, x_k), \\
   a_{T:T+1} &= 1.
\end{aligned}
\label{eq:definition}
\end{equation}
For notational convenience and to enhance readability, we also define
\begin{equation}
\begin{aligned}
   \psi_{0,1}(x_0,x_1) &\triangleq \psi_1(x_1),\\
   \psi_{k-1,k}(x_{k-1}, x_k) &\triangleq \psi_k(x_{k-1}, x_k), \\
   \psi_{T,T+1}(x_T,x_{T+1}) &\triangleq 1.
\end{aligned}
\label{eq:definition_potentials}
\end{equation}
Now, given two elements $a_{i:j}$ and $a_{j:k}$, the binary associative operator $\otimes$ for  the forward-backward algorithm in an HMM for $0 \le i < j < k$ is 
\begin{align}
 a_{i:j} \otimes a_{j:k} &= \sum_{x_{j}} \psi_{i, j}(x_i, x_j) \psi_{j,k}(x_{j}, x_{k}),
\end{align}
which also implies the following representation for a general element:
\begin{equation}
\begin{aligned}
   a_{i:k} &= \psi_{i,k}(x_i, x_k).
\end{aligned}
\end{equation}
\end{definition}

\begin{lemma}
\label{lem:otimes_assoc}
The operator $\otimes$ is associative.
\end{lemma}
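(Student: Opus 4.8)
The plan is to verify associativity directly by expanding both groupings $(a_{i:j} \otimes a_{j:k}) \otimes a_{k:\ell}$ and $a_{i:j} \otimes (a_{j:k} \otimes a_{k:\ell})$ for indices $0 \le i < j < k < \ell$, and showing that both equal the same triple sum. Using the representation $a_{i:k} = \psi_{i,k}(x_i,x_k)$ together with the definition of $\otimes$, the left grouping first produces the intermediate element $a_{i:k} = \sum_{x_j} \psi_{i,j}(x_i,x_j)\,\psi_{j,k}(x_j,x_k)$, and then applying $\otimes$ with $a_{k:\ell}$ gives $\sum_{x_k}\bigl(\sum_{x_j}\psi_{i,j}(x_i,x_j)\,\psi_{j,k}(x_j,x_k)\bigr)\psi_{k,\ell}(x_k,x_\ell)$.

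The key step is then to invoke the fact that finite summation commutes and that multiplication of real-valued functions is associative and distributes over addition, so the nested sums can be merged into a single double sum $\sum_{x_j}\sum_{x_k}\psi_{i,j}(x_i,x_j)\,\psi_{j,k}(x_j,x_k)\,\psi_{k,\ell}(x_k,x_\ell)$, with the order of summation over $x_j$ and $x_k$ immaterial. Carrying out the analogous expansion for the right grouping yields the intermediate element $a_{j:\ell} = \sum_{x_k}\psi_{j,k}(x_j,x_k)\,\psi_{k,\ell}(x_k,x_\ell)$ and then $\sum_{x_j}\psi_{i,j}(x_i,x_j)\bigl(\sum_{x_k}\psi_{j,k}(x_j,x_k)\,\psi_{k,\ell}(x_k,x_\ell)\bigr)$, which after pulling the $x_j$-independent inner sum through and reindexing is the same double sum. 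Hence the two groupings coincide.

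I would also note that the boundary definitions in Eq.~\eqref{eq:definition_potentials} are consistent with this: when $i=0$ the potential $\psi_{0,1}(x_0,x_1)=\psi_1(x_1)$ does not actually depend on $x_0$, but since $x_0$ never appears as a summation variable in any $\otimes$ (the leftmost index is only ever a free index), this causes no issue; similarly the terminal factor $\psi_{T,T+1}\equiv 1$ is handled automatically. So the associativity argument is uniform across all admissible index triples.

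I do not expect a genuine obstacle here — the result is essentially the statement that ``sum of products, summed again, can be reordered,'' i.e. a consequence of the commutative semiring structure of $(\mathbb{R}_{\ge 0}, +, \times)$. The only mild care needed is bookkeeping: making sure the free variables on the outer element $a_{i:\ell}$ are exactly $x_i$ and $x_\ell$ in both groupings (they are, since $x_j$ and $x_k$ are both summed out), and being explicit that the operator is only defined for strictly increasing index triples so that the intermediate elements are always well-formed. A clean way to present it is to first establish the closed form $a_{i:k} = \psi_{i,k}(x_i,x_k)$ with $\psi_{i,k}(x_i,x_k) = \sum_{x_{i+1:k-1}} \prod_{t=i+1}^{k} \psi_{t-1,t}(x_{t-1},x_t)$ by induction on $k-i$, and then observe that both groupings reduce to this same closed form, which makes associativity immediate.
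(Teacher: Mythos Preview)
Your proposal is correct and follows essentially the same approach as the paper: expand $(a_{i:j}\otimes a_{j:k})\otimes a_{k:\ell}$ as a nested sum, merge it into the double sum $\sum_{x_j,x_k}\psi_{i,j}\psi_{j,k}\psi_{k,\ell}$, and regroup to obtain $a_{i:j}\otimes(a_{j:k}\otimes a_{k:\ell})$. Your additional remarks on boundary potentials and the alternative closed-form-by-induction presentation are sound but go beyond what the paper's proof records.
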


\begin{proof} See Appendix~\ref{sec:proof_of_assoc_fs} for the proof.
\end{proof}

Now, we present two theorems that allow us to formulate the forward-backward algorithm in terms of the associative operations.

\begin{theorem}
The forward potential is given as
\begin{align*}
 a_{0:k} &= \psi_{1,k}^f(x_{k}), \quad k > 0.
\end{align*}
\label{th:theorem_forward_potentials}
\end{theorem}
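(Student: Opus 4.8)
The plan is to prove the identity by induction on $k$, using the associativity of $\otimes$ from Lemma~\ref{lem:otimes_assoc} together with the explicit definition of the operator. First I would make precise what $a_{0:k}$ means for $k>1$: it is the $\otimes$-reduction $a_{0:1} \otimes a_{1:2} \otimes \cdots \otimes a_{k-1:k}$ of the elementary factors in Eq.~\eqref{eq:definition}, which is an unambiguous function of $x_k$ alone, precisely because $\otimes$ is associative (so the bracketing is irrelevant) and because $\psi_{0,1}(x_0,x_1) = \psi_1(x_1)$ does not depend on the dummy argument $x_0$.

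For the base case $k=1$, both sides equal $\psi_1(x_1)$: the left side by the definition $a_{0:1} = \psi_1(x_1)$ in Eq.~\eqref{eq:definition}, and the right side because $\psi_{1,1}^f(x_1) = \psi_1(x_1)$ is exactly the initialization of the forward pass in Algorithm~\ref{alg:fb_algorithm} (equivalently, Eq.~\eqref{eq:definition_f_potentials} read with an empty sum and an empty product).

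For the inductive step, assume $a_{0:k-1} = \psi_{1,k-1}^f(x_{k-1})$. By associativity I may peel off the last factor, $a_{0:k} = a_{0:k-1} \otimes a_{k-1:k}$, and then apply the definition of $\otimes$ with $i=0$, $j=k-1$. Using $a_{k-1:k} = \psi_{k-1,k}(x_{k-1},x_k)$ and the general-element representation $a_{0:k-1} = \psi_{0,k-1}(x_0,x_{k-1})$ (which, as noted, is independent of $x_0$), this gives $a_{0:k} = \sum_{x_{k-1}} a_{0:k-1}\,\psi_{k-1,k}(x_{k-1},x_k)$. Substituting the inductive hypothesis yields $a_{0:k} = \sum_{x_{k-1}} \psi_{1,k-1}^f(x_{k-1})\,\psi_{k-1,k}(x_{k-1},x_k)$, which is precisely the forward recursion in line~4 of Algorithm~\ref{alg:fb_algorithm}; since $\psi_{1,k}^f$ satisfies that same recursion (as one checks directly by splitting off $\sum_{x_{k-1}}$ in Eq.~\eqref{eq:definition_f_potentials}), we conclude $a_{0:k} = \psi_{1,k}^f(x_k)$, closing the induction.

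I do not expect a genuine obstacle here: the content is essentially bookkeeping — fixing the meaning of $a_{0:k}$ as an associative reduction, handling the vacuous dummy variable $x_0$ cleanly, and matching the one-step unrolling of $\otimes$ against the forward recursion for $\psi_{1,k}^f$. The only place to be slightly careful is invoking associativity to justify writing $a_{0:k} = a_{0:k-1}\otimes a_{k-1:k}$ without worrying about how the earlier factors were grouped, which is exactly what Lemma~\ref{lem:otimes_assoc} provides.
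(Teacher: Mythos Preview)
Your proposal is correct and follows essentially the same route as the paper: induction on $k$, with the base case $a_{0:1}=\psi_1(x_1)$ by definition and the inductive step obtained by peeling off $a_{k-1:k}$ via associativity and matching the resulting sum against the forward recursion for $\psi_{1,k}^f$. The only cosmetic difference is that the paper expands $\psi_{1,k-1}^f$ explicitly via Eq.~\eqref{eq:definition_f_potentials} before collapsing the sums, whereas you appeal directly to line~4 of Algorithm~\ref{alg:fb_algorithm}; both are equivalent.
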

\begin{proof}
Since the operator is associative, it is enough to prove by induction that
\begin{equation}
\begin{aligned}
a_{0:k} &= a_{0:1} \otimes a_{1:2} \otimes \cdots \\
 &\otimes a_{k-2:k-1} \otimes a_{k-1:k}  =  \psi_{1,k}^f(x_k).
\end{aligned}
\label{eq:theorem1_induction}
\end{equation}
Eq.~\eqref{eq:theorem1_induction} holds for $k=1$ by definition of $a_{0:1}$ in Eq.~\eqref{eq:definition}. That is, $a_{0:1} = \psi_{1}(x_{1})$.
Then, we assume that
\begin{equation}
\begin{aligned}
  &a_{0:k-1} = a_{0:1} \otimes a_{1:2} \otimes \cdots \otimes a_{k-2:k-1}= \psi_{1,k-1}^f(x_{k-1}) 
\end{aligned}
\label{eq:theorem1_induction_case}
\end{equation}
holds. We need to prove that Eq.~\eqref{eq:theorem1_induction} holds for any $k > 1$. By applying the binary operator $\otimes$ with $a_{k-1:k}$ to the left-hand side of Eq.~\eqref{eq:theorem1_induction_case}, we get
\begin{equation*}
\begin{aligned}
&a_{0:k-1} \otimes a_{k-1:k} \\
&=  \sum_{x_{k-1}} \psi_{1,k-1}^f(x_{k-1}) \,  \psi_{k-1,k}(x_{k-1}, x_{k})\\
&= \sum_{x_{k-1}} \bigg[ \left( \sum_{x_{1:k-2}} \psi_1(x_1) \prod_{t=2}^{k-1} \psi_{t-1, t}(x_{t-1}, x_{t}) \right) \\
&\qquad \times \psi_{k-1,k}(x_{k-1}, x_{k}) \bigg] \\
&= \sum_{x_{1:k-1}} \psi_1(x_1) \prod_{t=2}^{k} \psi_{t-1, t}(x_{t-1}, x_{t}) \\
&= \psi_{1,k}^f(x_{k}).
\end{aligned}
\label{eq:theorem1_induction_proof}
\end{equation*}
This concludes the proof.
\end{proof}
\begin{theorem}
The backward potential is given as
\begin{align*}
 a_{k:T+1} &= \psi_{k,T}^b(x_{k}), \quad k > 0.
\end{align*}
\label{th:theorem_backward_potentials}
\end{theorem}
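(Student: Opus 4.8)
The plan is to mirror the proof of Theorem~\ref{th:theorem_forward_potentials}, but to run the induction backwards in time. Since $\otimes$ is associative by Lemma~\ref{lem:otimes_assoc}, it suffices to establish by reverse induction on $k$ that
\begin{equation*}
a_{k:T+1} = a_{k:k+1} \otimes a_{k+1:k+2} \otimes \cdots \otimes a_{T:T+1} = \psi_{k,T}^b(x_k).
\end{equation*}
First I would check the base case $k = T$: the right-hand product collapses to the single element $a_{T:T+1} = 1$ from Eq.~\eqref{eq:definition}, and $\psi_{T,T}^b(x_T) = 1$ follows directly from the definition in Eq.~\eqref{eq:definition_b_potentials} together with the convention $\psi_{T,T+1}(x_T,x_{T+1}) = 1$. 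So the claim holds at $k = T$.

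For the inductive step I would assume $a_{k+1:T+1} = \psi_{k+1,T}^b(x_{k+1})$ and apply $\otimes$ on the left with $a_{k:k+1} = \psi_{k,k+1}(x_k,x_{k+1})$; this is a legitimate use of the operator since $k < k+1 < T+1$ whenever $k < T$. Expanding via the definition of $\otimes$ and then unfolding $\psi_{k+1,T}^b$ from Eq.~\eqref{eq:definition_b_potentials} gives
\begin{equation*}
\begin{aligned}
a_{k:k+1} \otimes a_{k+1:T+1}
&= \sum_{x_{k+1}} \psi_{k,k+1}(x_k,x_{k+1}) \, \psi_{k+1,T}^b(x_{k+1}) \\
&= \sum_{x_{k+1}} \psi_{k,k+1}(x_k,x_{k+1}) \sum_{x_{k+2:T}} \prod_{t=k+1}^T \psi_{t,t+1}(x_t,x_{t+1}) \\
&= \sum_{x_{k+1:T}} \prod_{t=k}^T \psi_{t,t+1}(x_t,x_{t+1})
= \psi_{k,T}^b(x_k),
\end{aligned}
\end{equation*}
which, by associativity, equals $a_{k:T+1}$; this closes the induction. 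Note in passing that this computation is exactly the backward recursion of Algorithm~\ref{alg:fb_algorithm}.

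I do not expect a genuine obstacle here; the only thing that needs care is the boundary bookkeeping, namely checking that the one-term product at $k = T$ is handled correctly and that the dummy potential $\psi_{T,T+1} \equiv 1$ lines up between the definition of $a_{T:T+1}$ in Eq.~\eqref{eq:definition} and the definition of $\psi_{k,T}^b$ in Eq.~\eqref{eq:definition_b_potentials}. An alternative route would be a symmetry argument, relabelling the time index $t \mapsto T+1-t$ so that the backward recursion becomes a forward one and then quoting Theorem~\ref{th:theorem_forward_potentials}; however, the direct reverse induction above is cleaner and avoids having to track the relabelled potentials.
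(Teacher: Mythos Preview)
Your proposal is correct and follows essentially the same approach as the paper's own proof: reverse induction with base case $k=T$ using $a_{T:T+1}=1$, inductive hypothesis $a_{k+1:T+1}=\psi_{k+1,T}^b(x_{k+1})$, and then left-application of $a_{k:k+1}$ expanded via the definition of $\otimes$ and Eq.~\eqref{eq:definition_b_potentials}. The chain of equalities you write is line-for-line the same as the paper's.
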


\begin{proof}
We prove by induction that
\begin{equation}
\begin{aligned}
a_{k:T+1} &= a_{k:k+1} \otimes a_{k+1:k+2} \otimes \cdots \\
 &\otimes a_{T-1:T} \otimes a_{T:T+1}  =  \psi_{k,T}^b(x_k)
\end{aligned}
\label{eq:theorem2_induction}
\end{equation}
holds. Eq.~\eqref{eq:theorem2_induction} holds for $k=T$ by definition of $a_{T:T+1} = 1$ in Eq.~\eqref{eq:definition}. Then, we assume that
\begin{equation}
\begin{aligned}
  a_{k+1:T+1} &= a_{k+1:k+2} \otimes a_{k+2:k+3} \otimes \cdots \otimes a_{T:T+1}\\
  &= \psi_{k+1,T}^b(x_{k+1}) 
\end{aligned}
\label{eq:theorem2_induction_case}
\end{equation}
holds. We need to prove that Eq.~\eqref{eq:theorem2_induction} holds for any $k < T$. We start by applying the binary operator $\otimes$ with $a_{k:k+1}$ to the left-hand side of Eq.~\eqref{eq:theorem2_induction_case} to yield
\begin{equation*}
\begin{aligned}
&a_{k:k+1} \otimes a_{k+1:T+1} \\
&=  \sum_{x_{k+1}} \psi_{k,k+1}(x_{k}, x_{k+1}) \, \psi_{k+1, T}^b(x_{k+1}) \\
&= \sum_{x_{k+1}} \bigg[ \psi_{k,k+1}(x_{k}, x_{k+1})\\
&\qquad \times \left( \sum_{x_{k+2:T}} \prod_{t=k+1}^{T} \psi_{t,t+1}(x_{t}, x_{t+1}) \right) \bigg]\\
&=  \sum_{x_{k+1:T}} \prod_{t=k}^{T} \psi_{t,t+1}(x_{t}, x_{t+1})\\
&= \psi_{k,T}^b(x_k).
\end{aligned}
\label{eq:theorem2_induction_proof}
\end{equation*}
This concludes the proof.
\end{proof}
Now, we can express the marginal in Eq.~\eqref{eq:smoothing_dist_extended} using $a_{0:k}$ and $a_{k:T+1}$ as
\begin{equation}
    p(x_k) = \frac{1}{Z_k} \, a_{0:k} \, a_{k:T+1}.
\label{eq:marginals_by_elements}
\end{equation}
In the proofs above we have implicitly used the sum-product algorithm in Algorithm~\ref{alg:fb_algorithm} to derive the results, which can be done in $O(TD^2)$ steps. However, because the operator is associative, we can reorder the computations by recombining the operations which is the key to parallelization. This is discussed next.

\subsection{Parallelization of the sum-product algorithm}
\label{sec:par_sum_product}

In this section, the aim is to enable the parallel computation of the forward and backward potentials. It follows from the previous section that the computation of the forward potentials corresponds to the computation of the all-prefix-sums operation for the associative operator $\otimes$ and elements $a_{i:i+1}$. In fact, we have that
\begin{equation}
\begin{aligned}
  a_{0:1} &= \psi_{1,1}^f(x_1), \\
  a_{0:2} &= a_{0:1} \otimes a_{1:2} = \psi_{1,2}^f(x_2), \\
  &\vdots \\
  a_{0:T} &= a_{0:1} \otimes \cdots  \otimes a_{T-1:T}  =  \psi_{1,T}^f(x_T).
\end{aligned}
\end{equation}
Similarly, computing the backward potentials, which correspond to elements $\{a_{k:T}\}$ can be seen as the reversed all-prefix-sums operation. 

Therefore, we can use the parallel-scan algorithm to compute the forward and backward potentials along with the marginal distributions in parallel. The pseudocode of the resulting algorithm is summarized in Algorithm~\ref{alg:par_sum_prodcut}. 

\begin{algorithm}[H]
\begin{algorithmic}[1]
  \REQUIRE The potentials $\psi_k(\cdot)$, $k=1,\ldots,T$ and the operator $\otimes$, see Definition~\ref{def:filtering_smoothing_operator}.
  \ENSURE The marginals $p(x_k)$ for $k=1,\ldots,T$.
  \FOR[In parallel]{$k\gets1$ \textbf{to} $T$}
  \STATE Initialize $a_{k-1:k}$.
  \ENDFOR
  \STATE Run parallel-scan to get $a_{0:k} = \psi_{1,k}^f(x_{k})$, $k=1,\ldots,T$.
  \FOR[In parallel]{$k\gets1$ \textbf{to} $T$}
  \STATE Initialize $a_{k:k+1}$.
  \ENDFOR
  \STATE Run reversed parallel-scan to get $a_{k:T+1} = \psi_{k,T}^b(x_{k})$, $k=1,\ldots,T$.
  \FOR[In parallel]{$k\gets1$ \textbf{to} $T$}
  \STATE Compute marginals $p(x_k)$ using Eq.~\eqref{eq:marginals_by_elements}.
  \ENDFOR
\end{algorithmic}
\caption{The parallel sum-product algorithm.}
\label{alg:par_sum_prodcut}
\end{algorithm}

As all the steps in Algorithm~\ref{alg:par_sum_prodcut} are either fully parallelizable, or  parallelizable by means of the parallel-scan algorithm, then span and work complexities can be summarized as follows.

\begin{proposition}
The parallel sum-product algorithm (Algorithm~\ref{alg:par_sum_prodcut}) has a span complexity $O(\log T)$ and a work complexity $O(T)$.
\end{proposition}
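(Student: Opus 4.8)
The plan is to analyze Algorithm~\ref{alg:par_sum_prodcut} line by line, separating the steps into two categories: those that are embarrassingly parallel (the \texttt{for} loops marked ``In parallel''), and the two invocations of the parallel-scan routine (Algorithm~\ref{alg:parprefix}). For the embarrassingly parallel steps --- initializing the elements $a_{k-1:k}$ and $a_{k:k+1}$ and forming the marginals $p(x_k)$ via Eq.~\eqref{eq:marginals_by_elements} --- each of the $T$ iterations touches only quantities with a single index $k$ and is independent of the others; so each loop has span $O(1)$ and total work $O(T)$, where the hidden constant absorbs the $O(D^2)$ cost of a single $\otimes$-style operation on $D$-state potentials (the proposition's complexities are stated in $T$, treating $D$ as fixed).

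Next I would invoke the established properties of the parallel-scan algorithm. By Lemma~\ref{lem:otimes_assoc} the operator $\otimes$ is associative, which is exactly the hypothesis needed for Algorithm~\ref{alg:parprefix} to apply. As recalled in Section~\ref{subsec:parallel_scan_algorithm}, the parallel-scan computes the all-prefix-sums in $O(\log T)$ span and $O(T)$ work (the up-sweep and down-sweep are each traversals of a balanced binary tree of depth $\log_2 T$, with the total number of $\otimes$-applications across all levels being $O(T)$). Theorem~\ref{th:theorem_forward_potentials} identifies the non-reversed all-prefix-sums of the sequence $(a_{0:1}, a_{1:2}, \ldots, a_{T-1:T})$ with the forward potentials $\psi_{1,k}^f(x_k)$, and Theorem~\ref{th:theorem_backward_potentials} identifies the reversed all-prefix-sums of $(a_{k:k+1})$ with the backward potentials $\psi_{k,T}^b(x_k)$; the reversed variant is obtained, as noted in the text, by reversing inputs and outputs and the internal operation, which does not change the asymptotic cost. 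Hence each of the two scan calls contributes span $O(\log T)$ and work $O(T)$.

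Finally I would combine the pieces: the algorithm is a constant-length sequence of stages, each of span $O(1)$ or $O(\log T)$ and work $O(T)$, so by additivity of span along a sequential composition and additivity of work, the overall span is $O(\log T)$ and the overall work is $O(T)$. I do not expect a genuine obstacle here; the only point requiring a little care is to state clearly at the outset that $D$ is treated as a constant, so that the $O(D^2)$ cost of combining two potentials is absorbed into the constants --- otherwise the work bound would read $O(D^2 T)$ and the span $O(D^2 \log T)$. I would also remark that the assumption ``$T$ is a power of $2$'' in Algorithm~\ref{alg:parprefix} is not essential, since padding to the next power of two at most doubles $T$ and hence does not affect the asymptotic claims.
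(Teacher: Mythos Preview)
Your proposal is correct and follows essentially the same approach as the paper: decompose Algorithm~\ref{alg:par_sum_prodcut} into the fully parallel initialization/marginal loops (span $O(1)$, work $O(T)$) and the two parallel-scan calls (span $O(\log T)$, work $O(T)$), then add. The paper's own proof is a terse two-sentence version of exactly this; your additional remarks about treating $D$ as a constant and padding $T$ to a power of two are sound elaborations that the paper either omits or handles in a separate comment following the proposition.
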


\begin{proof}
The initializations for the elements for both the forward and backward passes as well as the marginal computations are fully parallelizable and, hence, have span complexities $O(1)$ and work complexities $O(T)$. The parallel-scan algorithm passes have span complexities $O(\log T)$ and work complexities $O(T)$. Therefore, the total span complexity is $O(\log T)$ and the total work complexity is $O(T)$.
\end{proof}

It is useful to remark that the computational complexity depends also on the number of states $D$. However, it depends on the details of how much we can parallelize inside the initializations and operator applications, which is what determines the dependence of the span complexity on $D$.

\section{Parallel Viterbi and max-product algorithms}
\label{sec:max-product}

In this section, we present the parallel formulation of the Viterbi and max-product algorithms. In Section~\ref{subsec:classical_viterbi_algorithm}, we review the classical Viterbi algorithm. In Section~\ref{subsec:path_based_parallelization}, we propose a parallel Viterbi algorithm based on optimal paths. In Section~\ref{subsec:max_product_forumulation}, we propose another alternative parallel Viterbi algorithm based on the max-product formulation.

\subsection{Classical Viterbi algorithm}
\label{subsec:classical_viterbi_algorithm}

The \textit{Viterbi algorithm} is a classical algorithm based on dynamic programming principle that computes the most likely sequence of states~\cite{viterbi1967error,Larson66, cappe2006inference}, also known as the MAP estimate of the hidden states. We aim to compute the estimate $x^*_{1:T}$ by maximizing the posterior distribution
\begin{equation}
\begin{split}
  p(x_{1:T} \mid y_{1:T}) = \frac{p(y_{1:T}, x_{1:T})}{p(y_{1:T})}
  \propto p(y_{1:T},x_{1:T}).
\end{split}    
\end{equation}
This is equivalent to maximizing the joint probability distribution
\begin{equation}
\begin{split}
  p(y_{1:T},x_{1:T}) =
  p(x_1) \, p(y_1 \mid x_1)  \, \prod_{t=2}^T p(y_t \mid x_t) \, p(x_t \mid x_{t-1}),
\end{split}    
\end{equation}
which results in
\begin{equation}
\begin{split}
  x^*_{1:T} = \mathop{\arg \max}_{x_{1:T}} p(y_{1:T},x_{1:T}).
\end{split}
\label{eq:optimal_path}
\end{equation}
In terms of potentials, obtaining $x_{1:T}^*$ is  equivalent to the maximization of Eq.~\eqref{eq:joint_dist_markov}.

The classical Viterbi algorithm \cite{viterbi1967error, cappe2006inference} operates as follows. Assume that we have $V_{k-1}(x_{k-1})$ which denotes the probability of the maximum probability path $x^*_{1:k-1}$ that ends at $x_{k-1}$. We also assume the corresponding optimal state is $x_{k-1}^*$, which is a function of $x_k$, and hence, denoted as $u_{k-1}(x_k)$. Then, $V_k(x_k)$ and $u_{k-1}(x_k)$ are given by
\begin{equation}
\begin{split}
  V_k(x_k) &=
  \max_{x_{k-1}} [p(y_k \mid x_k) \, p(x_k \mid x_{k-1}) \, V_{k-1}(x_{k-1})], \\
  u_{k-1}(x_k) &=
  \mathop{\arg \max}_{x_{k-1}} [p(y_k \mid x_k) \, p(x_k \mid x_{k-1}) \, V_{k-1}(x_{k-1})],
\end{split}    
\end{equation}
with the initial condition
\begin{equation}
\begin{split}
   V_1(x_1) = p(x_1) \, p(y_1 \mid x_1).
\end{split}    
\end{equation}
At the final step, we just take
\begin{equation}
\begin{split}
  x^*_T = \argmax_{x_T} V_T(x_T),
\end{split}    
\end{equation}
and then we compute in backwards to recover the Viterbi path $x_{1:T}^*$ as
\begin{equation}
\begin{split}
  x^*_{k-1} = u_{k-1}(x^*_k).
\end{split}    
\end{equation}

In terms of potentials, the pseudocode for Viterbi algorithm can be written as in Algorithm~\ref{alg:vb_algorithm}. As the computational complexity of the forward pass is $O(D^2 T)$ and that of the backward pass is $O(T)$, the total computational complexity of the Viterbi algorithm is $O(D^2 T)$.

\begin{algorithm}[H]
\begin{algorithmic}[1]
  \REQUIRE The potentials $\psi_k(\cdot)$ for $k=1,\ldots,T$.
  \ENSURE The Viterbi path $x^*_{1:T}$.
  \STATE // Forward pass:
  \STATE $V_1(x_1) = \psi_1(x_1)$
  \FOR[Sequentially]{$k \gets 2$ \textbf{to} $T$}
      \STATE $V_{k}(x_k) = \max_{x_{k-1}} \left[  \psi_k(x_{k-1},x_k) \, V_{k-1}(x_{k-1}) \right]$
      \STATE $u_{k-1}(x_k) = \argmax_{x_{k-1}} \left[  \psi_k(x_{k-1},x_k) \, V_{k-1}(x_{k-1}) \right] $
  \ENDFOR
  \STATE // Backward pass:
  \STATE $x^*_T = \argmax_{x_T} V_T(x_T) $
  
  \FOR{$k \gets T$ \textbf{to} $2$} 
    \STATE $x^*_{k-1} = u_{k-1}(x^*_k)$
  \ENDFOR
\end{algorithmic}
\caption{The classical Viterbi algorithm.}
\label{alg:vb_algorithm}
\end{algorithm}

Please note that, in this paper, we assume that the MAP estimate is unique for simplicity of exposition. However, the results can be extended to account for multiple solutions.

\subsection{Path-based parallelization}
\label{subsec:path_based_parallelization}

In order to design a parallel version of the Viterbi algorithm, we first need to find an element $\tilde{a}$ and the binary associative operator $\vee$ for the Viterbi algorithm. 
Let $\tilde{a}_{i:j}$ consists of a pair of elements associated with probability and the path from state $x_i$ to $x_j$ such that
\begin{equation}
\tilde{a}_{i:j} = \begin{pmatrix}
A_{i:j}(x_i, x_j)\\
\hat{X}_{i:j}(x_i, x_j)
\end{pmatrix}.
\end{equation}
Here, $A_{i:j}(x_i, x_j)$ is the maximum probability of the path and  $\hat{X}_{i:j}(x_i, x_j)$ is a column vector containing the most probable path sequence starting at $x_i$ and ending at $x_j$. Now, we define the associative operator $\vee$.

\begin{definition}
For two elements
\begin{equation}
\tilde{a}_{i:j} = \begin{pmatrix}
A_{i:j}(x_i, x_j)\\
\hat{X}_{i:j}(x_i, x_j)
\end{pmatrix}, \quad
\tilde{a}_{j:k} = \begin{pmatrix}
A_{j:k}(x_j, x_k)\\
\hat{X}_{j:k}(x_j, x_k)
\end{pmatrix},
\end{equation}
the binary associative operator $\vee$ for the MAP estimate in an HMM is defined as 
\begin{equation}
    \begin{split}
  \tilde{a}_{i:k} = \tilde{a}_{i:j} \vee \tilde{a}_{j:k},\\
    \end{split}
\end{equation}
where
\begin{equation}
    \begin{split}
&\tilde{a}_{i:k} = \begin{pmatrix}
A_{i:k}(x_i, x_k)\\
\hat{X}_{i:k}(x_i, x_k)
\end{pmatrix}\\
&= \begin{pmatrix}
\max_{x_j} A_{i:j}(x_i, x_j) A_{j:k}(x_j, x_k)\\
\left( \hat{X}_{i:j}(x_i, \hat{x}_j({x_i, x_k}) ), \hat{x}_j({x_i, x_k}), \hat{X}_{j:k}(\hat{x}_j({x_i, x_k}), x_k) \right)
\end{pmatrix}\\
    \end{split}
\label{eq:viterbi_a_comb}
\end{equation}
and
\begin{equation}
    \begin{split}
& \hat{x}_j({x_i, x_k}) = \argmax_{x_j} A_{i:j}(x_i, x_j) A_{j:k}(x_j, x_k)
    \end{split}
\end{equation}
with
\begin{equation}
    \begin{split}
    A_{k-1:k}(x_{k-1}, x_{k}) &= \psi_{k-1,k}(x_{k-1}, x_{k}), \\
    \hat{X}_{k-1:k}(x_{k-1}, x_{k})  &= \emptyset, \\
    A_{0:1}(., x_1) &= \psi_1(x_1), \\
    \hat{X}_{0:1}(., x_1) &= \emptyset.
    \end{split}
    \label{eq:def_viterbi_first_elem}
\end{equation}
\label{def:viterbi_operator}
\end{definition}

\begin{lemma}
\label{lem:vee_assoc}
The operator $\vee$ is associative.
\end{lemma}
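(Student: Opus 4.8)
The plan is to verify associativity of $\vee$ by direct computation, taking three elements $\tilde{a}_{i:j}$, $\tilde{a}_{j:k}$, $\tilde{a}_{k:\ell}$ and showing that $(\tilde{a}_{i:j} \vee \tilde{a}_{j:k}) \vee \tilde{a}_{k:\ell}$ and $\tilde{a}_{i:j} \vee (\tilde{a}_{j:k} \vee \tilde{a}_{k:\ell})$ agree in both components, the probability component $A$ and the path component $\hat{X}$. I would handle the two components separately, since the probability component is the easier one and its analysis also supplies the ingredients needed for the path component.

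First I would treat the $A$-component. Expanding the left association, $A_{i:k}(x_i,x_k) = \max_{x_j} A_{i:j}(x_i,x_j)\, A_{j:k}(x_j,x_k)$ and then $A_{i:\ell}(x_i,x_\ell) = \max_{x_k} A_{i:k}(x_i,x_k)\, A_{k:\ell}(x_k,x_\ell) = \max_{x_k}\max_{x_j} A_{i:j}(x_i,x_j)\,A_{j:k}(x_j,x_k)\,A_{k:\ell}(x_k,x_\ell)$. Expanding the right association gives $\max_{x_j}\max_{x_k} A_{i:j}(x_i,x_j)\,A_{j:k}(x_j,x_k)\,A_{k:\ell}(x_k,x_\ell)$. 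These are equal because nested maxima commute and the product is non-negative (so nothing funny happens with the ordering of factors); this is exactly the same argument used for $\otimes$ in Lemma~\ref{lem:otimes_assoc} with $\max$ in place of $\sum$. So both associations yield $A_{i:\ell}(x_i,x_\ell) = \max_{x_j,x_k} A_{i:j}(x_i,x_j)\,A_{j:k}(x_j,x_k)\,A_{k:\ell}(x_k,x_\ell)$.

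Next, the $\hat{X}$-component. Under the uniqueness assumption stated in the paper (the MAP estimate is unique), the joint maximizer $(\hat{x}_j,\hat{x}_k)$ of $A_{i:j}(x_i,x_j)\,A_{j:k}(x_j,x_k)\,A_{k:\ell}(x_k,x_\ell)$ over $(x_j,x_k)$ is unique, and it coincides with the nested-argmax obtained from either association order: in the left association we first pick $\hat{x}_k(x_i,x_\ell) = \argmax_{x_k}[\max_{x_j}(\cdots)]\,A_{k:\ell}$ and then $\hat{x}_j$ inside; in the right association we pick $\hat{x}_j$ first then $\hat{x}_k$; both recover the same pair. Given this, I would show that the concatenated path is the same by unfolding the definition of $\hat{X}_{i:k}$ and $\hat{X}_{k:\ell}$ (resp.\ $\hat{X}_{i:j}$ and $\hat{X}_{j:\ell}$) at the optimal intermediate states and checking that both give the ordered sequence $\bigl(\hat{X}_{i:j}(x_i,\hat{x}_j),\,\hat{x}_j,\,\hat{X}_{j:k}(\hat{x}_j,\hat{x}_k),\,\hat{x}_k,\,\hat{X}_{k:\ell}(\hat{x}_k,x_\ell)\bigr)$, using the base cases in Eq.~\eqref{eq:def_viterbi_first_elem} to anchor the recursion.

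The main obstacle is the bookkeeping in the path component: one must be careful that the intermediate optimal state selected when associating one way (say $\hat{x}_k$ chosen as a function of $(x_i,x_\ell)$ via the already-combined block $A_{i:k}$) really is the same state as the one selected the other way, and that substituting it back into the inner argmax reproduces the same $\hat{x}_j$. This is where the uniqueness hypothesis does the real work — without it the two association orders could pick different (but equally optimal) tie-breaking paths, and the operator would only be associative up to equivalence of optimal paths. I would state this dependence explicitly and note, as the paper already does, that the non-unique case can be handled by carrying sets of optimal paths instead of a single one.
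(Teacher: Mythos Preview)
Your proposal is correct and follows essentially the same route as the paper: a direct verification that both association orders agree, handling the $A$-component via commutativity of nested maxima and the $\hat{X}$-component via the corresponding argmax consistency. If anything, your treatment of the path component is more careful than the paper's own---you explicitly unfold the concatenation and flag the role of the uniqueness assumption, whereas the paper's appendix manipulates the $\hat{X}$-part somewhat informally as a nested $\argmax$.
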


\begin{proof}
See Appendix~\ref{sec:proof_of_assoc_viterbi} for the proof.
\end{proof}

\begin{theorem}
\label{the:path_elem}
We have
\begin{equation}
    \begin{split}
    A_{i:j}(x_{i}, x_{j}) &= \max_{x_{i+1:j-1}} \prod_{k=i+1}^j \psi_{k-1,k}(x_{k-1}, x_{k}), \\
    \hat{X}_{i:j}(x_{i}, x_{j})  &= \argmax_{x_{i+1:j-1}} \prod_{k=i+1}^j \psi_{k-1,k}(x_{k-1}, x_{k}),
    \end{split}
    \label{eq:def_viterbi_elem}
\end{equation}
where $A_{i:j}$ corresponds to the probability of the MAP estimate starting from $x_i$ and ending at $x_j$, and $\hat{X}_{i:j}$ represents the sequence of the corresponding paths.
\end{theorem}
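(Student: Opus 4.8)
The plan is to argue by induction on the gap $j-i$, using associativity of $\vee$ (Lemma~\ref{lem:vee_assoc}) to make sense of $\tilde a_{i:j}$ for non-adjacent indices: since $\vee$ is associative, $\tilde a_{i:j}$ is the unambiguous $\vee$-fold $\tilde a_{i:i+1}\vee\tilde a_{i+1:i+2}\vee\cdots\vee\tilde a_{j-1:j}$ of the base elements in Eq.~\eqref{eq:def_viterbi_first_elem}, and it is this element whose components I claim are given by Eq.~\eqref{eq:def_viterbi_elem}. The base case $j=i+1$ is immediate: the product $\prod_{t=i+1}^{i+1}\psi_{t-1,t}(x_{t-1},x_t)=\psi_{i,i+1}(x_i,x_j)$ has no free intermediate variables, so the $\max$ and $\argmax$ over the empty index range $x_{i+1:i}$ return $\psi_{i,i+1}(x_i,x_j)$ and the empty sequence, matching $A_{i:i+1}$ and $\hat X_{i:i+1}$ in Eq.~\eqref{eq:def_viterbi_first_elem}.

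For the inductive step I would assume Eq.~\eqref{eq:def_viterbi_elem} holds for every pair of smaller gap and write $\tilde a_{i:k}=\tilde a_{i:k-1}\vee\tilde a_{k-1:k}$ by associativity. Unfolding the definition of $\vee$ in Eq.~\eqref{eq:viterbi_a_comb} with the split point $j$ replaced by $k-1$, and using $A_{k-1:k}(x_{k-1},x_k)=\psi_{k-1,k}(x_{k-1},x_k)$ and $\hat X_{k-1:k}=\emptyset$, gives $A_{i:k}(x_i,x_k)=\max_{x_{k-1}}A_{i:k-1}(x_i,x_{k-1})\,\psi_{k-1,k}(x_{k-1},x_k)$ together with $\hat X_{i:k}(x_i,x_k)=\bigl(\hat X_{i:k-1}(x_i,\hat x_{k-1}(x_i,x_k)),\,\hat x_{k-1}(x_i,x_k)\bigr)$, where $\hat x_{k-1}(x_i,x_k)=\argmax_{x_{k-1}}A_{i:k-1}(x_i,x_{k-1})\,\psi_{k-1,k}(x_{k-1},x_k)$. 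Substituting the induction hypothesis for $A_{i:k-1}$ and pulling the factor $\psi_{k-1,k}(x_{k-1},x_k)$, which is constant in $x_{i+1:k-2}$, inside the inner maximization, the two nested maxima collapse into a single one over $x_{i+1:k-1}$, which yields exactly the claimed formula for $A_{i:k}$.

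The remaining, slightly more delicate point is the path claim: I must show that $\bigl(\hat X_{i:k-1}(x_i,\hat x_{k-1}),\hat x_{k-1}\bigr)$ equals $\argmax_{x_{i+1:k-1}}\prod_{t=i+1}^{k}\psi_{t-1,t}(x_{t-1},x_t)$. This is the standard nested-$\argmax$ identity: jointly maximizing a function $f(x_{i+1:k-2},x_{k-1})$ is the same as first choosing $\hat x_{k-1}$ as the maximizer of $x_{k-1}\mapsto\max_{x_{i+1:k-2}}f(\cdot,x_{k-1})$ and then taking $\argmax_{x_{i+1:k-2}}f(\cdot,\hat x_{k-1})$. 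Here $f=\prod_{t=i+1}^{k}\psi_{t-1,t}$; by the induction hypothesis $\hat X_{i:k-1}(x_i,x_{k-1})=\argmax_{x_{i+1:k-2}}\prod_{t=i+1}^{k-1}\psi_{t-1,t}(x_{t-1},x_t)=\argmax_{x_{i+1:k-2}}f(x_{i+1:k-2},x_{k-1})$ (the extra factor $\psi_{k-1,k}$ being constant in $x_{i+1:k-2}$), and $\hat x_{k-1}$ is precisely the required outer maximizer. Appending $\hat x_{k-1}$ to this tuple then produces the full maximizing sequence $x_{i+1},\ldots,x_{k-1}$, closing the induction.

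The main obstacle is making this nested-$\argmax$ step fully rigorous: it relies on the paper's standing assumption that the MAP maximizer is unique, so that each $\argmax$ is a well-defined single sequence and the choice of $\hat x_{k-1}$ is unambiguous; with that assumption the step is routine. I would also add a short closing remark that, once Eq.~\eqref{eq:def_viterbi_elem} is established, the relation $\tilde a_{i:k}=\tilde a_{i:j}\vee\tilde a_{j:k}$ for any intermediate $j$ follows from associativity, so the value $\hat x_j(x_i,x_k)$ appearing in Eq.~\eqref{eq:viterbi_a_comb} is consistent regardless of which split point is used.
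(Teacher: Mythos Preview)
Your proposal is correct and follows essentially the same inductive argument as the paper: both prove Eq.~\eqref{eq:def_viterbi_elem} by induction on the gap $j-i$, with the base case $j=i+1$ immediate from Eq.~\eqref{eq:def_viterbi_first_elem} and the inductive step carried out by peeling off one boundary factor and collapsing the nested maximizations. The only cosmetic difference is that the paper peels from the left, writing $\tilde a_{i:j}=\tilde a_{i:i+1}\vee\tilde a_{i+1:j}$, whereas you peel from the right with $\tilde a_{i:k}=\tilde a_{i:k-1}\vee\tilde a_{k-1:k}$; your explicit treatment of the nested-$\argmax$ identity and the uniqueness assumption is, if anything, more careful than the paper's.
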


\begin{proof}
See Appendix~\ref{sec:proof_of_theorem_va}.
\end{proof}

Putting $i=0$ and $j=T+1$ in Theorem~\ref{the:path_elem}  above leads to the following result.
\begin{corollary}
We have
\begin{align*}
 \tilde{a}_{0:T+1} &= \begin{pmatrix}
 \psi_1(x^*_1) \prod_{t=2}^T \psi_{t-1, t}(x^*_{t-1}, x^*_t)\\
 x^*_{1:T}
 \end{pmatrix},
\end{align*}
where $x^*_{1:T}$ is the MAP estimate given by Eq.~\eqref{eq:optimal_path}.
\label{cor:theorem_va}
\end{corollary}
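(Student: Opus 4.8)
The plan is to apply Theorem~\ref{the:path_elem} with the special choice $i = 0$ and $j = T+1$, and then simplify the resulting expressions using the definitions in Eq.~\eqref{eq:def_viterbi_first_elem} and the structure of the Markovian joint distribution in Eq.~\eqref{eq:joint_dist_markov}. Concretely, Theorem~\ref{the:path_elem} gives
\[
A_{0:T+1}(x_0, x_{T+1}) = \max_{x_{1:T}} \prod_{k=1}^{T+1} \psi_{k-1,k}(x_{k-1}, x_k),
\]
and I would first observe that the boundary factors are trivial: $\psi_{0,1}(x_0, x_1) = \psi_1(x_1)$ carries no dependence on $x_0$, and $\psi_{T,T+1}(x_T, x_{T+1}) = 1$ by Eq.~\eqref{eq:definition_potentials}, so the product collapses to $\psi_1(x_1)\prod_{t=2}^T \psi_{t-1,t}(x_{t-1},x_t)$ and the maximization is genuinely over all of $x_{1:T}$ with no residual arguments. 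This is exactly the quantity maximized in Eq.~\eqref{eq:optimal_path} once one recalls the identification $\psi_1(x_1) = p(y_1\mid x_1)\,p(x_1)$ and $\psi_k(x_{k-1},x_k) = p(y_k\mid x_k)\,p(x_k\mid x_{k-1})$, so the top entry of $\tilde a_{0:T+1}$ equals $p(y_{1:T}, x^*_{1:T})$ evaluated along the optimizer, i.e.\ $\psi_1(x^*_1)\prod_{t=2}^T \psi_{t-1,t}(x^*_{t-1}, x^*_t)$.

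Next I would treat the path entry. Theorem~\ref{the:path_elem} gives $\hat X_{0:T+1}(x_0,x_{T+1}) = \argmax_{x_{1:T}} \prod_{k=1}^{T+1}\psi_{k-1,k}(x_{k-1},x_k)$, which by the same collapse of boundary factors is $\argmax_{x_{1:T}}\bigl[\psi_1(x_1)\prod_{t=2}^T\psi_{t-1,t}(x_{t-1},x_t)\bigr]$; under the uniqueness-of-MAP assumption stated after Algorithm~\ref{alg:vb_algorithm} this is the single sequence $x^*_{1:T}$ defined by Eq.~\eqref{eq:optimal_path}. One small bookkeeping point to make explicit is the indexing convention in $\hat X_{i:j}$: by Eq.~\eqref{eq:def_viterbi_elem} the vector $\hat X_{i:j}$ lists the intermediate states $x_{i+1},\ldots,x_{j-1}$, so for $i=0$, $j=T+1$ the intermediate range is exactly $x_1,\ldots,x_T$ — the full path — which is why no endpoint states leak in and the entry is precisely $x^*_{1:T}$ rather than a truncation of it.

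I expect the only real obstacle to be a cosmetic one: making sure the ``dummy'' boundary variables $x_0$ and $x_{T+1}$ are handled cleanly, since $\tilde a_{0:T+1}$ is nominally a function of $(x_0, x_{T+1})$ but in fact constant in both. I would dispatch this by noting that $A_{0:1}(\,\cdot\,, x_1) = \psi_1(x_1)$ and $\hat X_{0:1}(\,\cdot\,,x_1) = \emptyset$ from Eq.~\eqref{eq:def_viterbi_first_elem} are already written as independent of the first slot, and $\psi_{T,T+1} \equiv 1$ kills the last slot, so the corollary's statement of $\tilde a_{0:T+1}$ as a bare pair (with no arguments displayed) is justified. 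Assembling these observations into the $2\times 1$ vector form of Eq.~\eqref{eq:viterbi_a_comb} yields exactly the claimed expression, completing the proof.
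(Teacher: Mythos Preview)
Your proposal is correct and follows exactly the approach the paper takes: the paper simply states ``Putting $i=0$ and $j=T+1$ in Theorem~\ref{the:path_elem} above leads to the following result'' and gives no further argument, so your write-up is in fact a more detailed execution of precisely that idea. Your handling of the dummy boundary variables $x_0$ and $x_{T+1}$ via Eq.~\eqref{eq:definition_potentials} and Eq.~\eqref{eq:def_viterbi_first_elem}, and your observation that $\hat X_{0:T+1}$ lists exactly the intermediate states $x_1,\ldots,x_T$, are the right bookkeeping steps to make the corollary rigorous.
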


It is now possible to form a parallel algorithm for the elements $\tilde{a}_{i:j}$ with the associative operator $\vee$ by leveraging the parallel-scan algorithm for computing the quantity in Corollary~\ref{cor:theorem_va}. However, each element $\tilde{a}_{i:j}$ contains a path of length $j-i-2$ for each state pair and, therefore, the memory requirements to store the state sequences are high. Thus, in the next section, we propose an alternative approach with lower memory requirements.

\subsection{Max-product formulation}
\label{subsec:max_product_forumulation}

Until now, we have used the Viterbi algorithm in forward direction. However, the MAP estimate can also be computed by using the max-product algorithm (see cf.\cite{Koller:2009}). Let $\tilde{\psi}^f_{k}(x_k)$ denote the maximum probability of the optimal path ending at $x_k$, and $\tilde{\psi}^b_{k}(x_k)$ denote the maximum probability of starting at $x_k$. This implies that
\begin{equation}
\begin{split}
 \tilde{\psi}^f_{k}(x_k) &= A_{0:k}(x_0,x_k), \\
 \tilde{\psi}^b_{k}(x_k) &= A_{k:T+1}(x_k,x_{T+1}),
\end{split}
\end{equation}
where the dependence on $x_0$ and $x_{T+1}$ is only a notational expression (cf.\ Eq.~\eqref{eq:definition_potentials}). From Definition~\ref{def:viterbi_operator}, we get the following recursions for these quantities.
\begin{lemma}
The maximum forward and backward probabilities admit the recursions
\begin{equation}
    \begin{aligned}
\tilde{\psi}^f_{k}(x_{k}) &= \max_{x_{k-1}} \psi_{k-1:k}(x_{k-1}, x_k) \tilde{\psi}^f_{k-1}(x_{k-1}), \\ 
    \tilde{\psi}^b_{k}(x_{k}) &= \max_{x_{k+1}} \psi_{k:k+1}(x_{k}, x_{k+1}) \tilde{\psi}^b_{k+1}(x_{k+1}),
    \end{aligned}
\end{equation}
with initial conditions $\tilde{\psi}^f_{1}(x_{1}) = \psi_1(x_1)$ and $\tilde{\psi}^b_T(x_{T}) = 1$.
\end{lemma}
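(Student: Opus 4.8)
The plan is to derive both recursions directly from the definition of the maximum forward and backward probabilities as special cases of the path-element quantities $A_{0:k}$ and $A_{k:T+1}$, together with the decomposition property of the operator $\vee$ established in Theorem~\ref{the:path_elem} and Lemma~\ref{lem:vee_assoc}. Concretely, by Theorem~\ref{the:path_elem} we know that $A_{i:j}(x_i,x_j) = \max_{x_{i+1:j-1}} \prod_{k=i+1}^{j} \psi_{k-1,k}(x_{k-1},x_k)$, so $\tilde{\psi}^f_k(x_k) = A_{0:k}(x_0,x_k) = \max_{x_{1:k-1}} \psi_1(x_1)\prod_{t=2}^{k}\psi_{t-1,t}(x_{t-1},x_t)$. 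The forward recursion then follows by peeling off the last factor and the last maximization variable: $\max_{x_{1:k-1}} = \max_{x_{k-1}}\max_{x_{1:k-2}}$, and since $\psi_{k-1,k}(x_{k-1},x_k)$ does not depend on $x_{1:k-2}$, it can be pulled out of the inner maximization, leaving $\max_{x_{1:k-2}}\psi_1(x_1)\prod_{t=2}^{k-1}\psi_{t-1,t}(x_{t-1},x_t) = A_{0:k-1}(x_0,x_{k-1}) = \tilde{\psi}^f_{k-1}(x_{k-1})$.

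For the backward recursion I would argue symmetrically: from Theorem~\ref{the:path_elem}, $\tilde{\psi}^b_k(x_k) = A_{k:T+1}(x_k,x_{T+1}) = \max_{x_{k+1:T}}\prod_{t=k+1}^{T+1}\psi_{t-1,t}(x_{t-1},x_t)$, where $\psi_{T,T+1} \equiv 1$. Splitting off the first maximization, $\max_{x_{k+1:T}} = \max_{x_{k+1}}\max_{x_{k+2:T}}$, and noting that $\psi_{k,k+1}(x_k,x_{k+1})$ is the only factor involving neither of the remaining variables nor the product tail, we factor it out to get $\max_{x_{k+1}}\psi_{k,k+1}(x_k,x_{k+1})\,\bigl(\max_{x_{k+2:T}}\prod_{t=k+2}^{T+1}\psi_{t-1,t}(x_{t-1},x_t)\bigr) = \max_{x_{k+1}}\psi_{k,k+1}(x_k,x_{k+1})\,A_{k+1:T+1}(x_{k+1},x_{T+1}) = \max_{x_{k+1}}\psi_{k,k+1}(x_k,x_{k+1})\,\tilde{\psi}^b_{k+1}(x_{k+1})$. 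Alternatively, and perhaps more cleanly, one can invoke the operator $\vee$ directly: since $\tilde{a}_{0:k} = \tilde{a}_{0:k-1}\vee\tilde{a}_{k-1:k}$ (associativity and the identification in Theorem~\ref{the:path_elem}), reading off the first component via Eq.~\eqref{eq:viterbi_a_comb} with $i=0$, $j=k-1$ gives exactly $A_{0:k}(x_0,x_k) = \max_{x_{k-1}} A_{0:k-1}(x_0,x_{k-1}) A_{k-1:k}(x_{k-1},x_k)$, and the backward case follows from $\tilde{a}_{k:T+1} = \tilde{a}_{k:k+1}\vee\tilde{a}_{k+1:T+1}$ in the same way, using $A_{k-1:k} = \psi_{k-1,k}$ from Eq.~\eqref{eq:def_viterbi_first_elem}.

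The initial conditions are immediate: $\tilde{\psi}^f_1(x_1) = A_{0:1}(x_0,x_1) = \psi_1(x_1)$ by Eq.~\eqref{eq:def_viterbi_first_elem}, and $\tilde{\psi}^b_T(x_T) = A_{T:T+1}(x_T,x_{T+1}) = \psi_{T,T+1} = 1$ by the convention in Eq.~\eqref{eq:definition_potentials}.

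I do not anticipate a genuine obstacle here; the lemma is essentially a restatement of the first-component action of $\vee$ under the path interpretation, so the only care needed is bookkeeping with the index conventions — in particular making sure the boundary potentials $\psi_{0,1}$ and $\psi_{T,T+1}$ are handled consistently, and that the "dummy" variables $x_0$ and $x_{T+1}$ are treated purely as notational placeholders as flagged in the text preceding the lemma. If anything is mildly delicate, it is justifying the interchange of $\max$ over disjoint blocks of variables and the extraction of a factor that does not depend on the block being maximized over; this is the standard distributivity of $\max$ over (nonnegative) multiplication, which I would state once and then apply in both directions.
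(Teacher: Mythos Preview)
Your proposal is correct and aligns with the paper's own justification: the paper does not give a separate proof of this lemma but simply remarks that the recursions follow from Definition~\ref{def:viterbi_operator}, which is exactly your ``alternative'' argument via $\tilde{a}_{0:k} = \tilde{a}_{0:k-1}\vee\tilde{a}_{k-1:k}$ and $\tilde{a}_{k:T+1} = \tilde{a}_{k:k+1}\vee\tilde{a}_{k+1:T+1}$. Your first route through Theorem~\ref{the:path_elem} and explicit peeling of the outermost $\max$ is a valid, slightly more unpacked version of the same idea.
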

The MAP estimate $x_k^*$ can be computed by maximizing the product $\tilde{\psi}^f_{k}(x_{k}) \tilde{\psi}^b_{k}(x_{k})$. This is summarized in the following theorem.

\begin{theorem}
Given the maximum forward potentials $\tilde{\psi}^f_{k}(x_{k})$ and the maximum backward potentials $\tilde{\psi}^b_{k}(x_{k})$, the MAP estimate at time step k is determined by
\begin{align}
 x^*_k = \argmax_{x_k} \tilde{\psi}^f_{k}(x_{k}) \tilde{\psi}^b_{k}(x_{k})
\label{eq:theorem_va_global_optimum}
\end{align}
\label{th:theorem_va_global_optimum}
for $k=1,\ldots,T$.
\end{theorem}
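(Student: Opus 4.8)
The plan is to reduce the claim to the definition of the MAP estimate in Eq.~\eqref{eq:optimal_path} by showing that the product $\tilde{\psi}^f_{k}(x_k) \tilde{\psi}^b_{k}(x_k)$ equals, up to a multiplicative factor independent of $x_k$, the maximized joint distribution $\max_{\x \backslash x_k} \psi_1(x_1) \prod_{t=2}^T \psi_{t-1,t}(x_{t-1},x_t)$. Once that identity is established, taking $\argmax_{x_k}$ of both sides yields $x^*_k$, because scaling by a positive constant does not change the location of the maximum and the maximizing $x_k$ of the fully-maximized joint is precisely the $k$-th coordinate of the global MAP path $x^*_{1:T}$ (using the standing assumption that the MAP estimate is unique).

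First I would expand $\tilde{\psi}^f_{k}(x_k)$ and $\tilde{\psi}^b_{k}(x_k)$ using the identifications $\tilde{\psi}^f_{k}(x_k) = A_{0:k}(x_0,x_k)$ and $\tilde{\psi}^b_{k}(x_k) = A_{k:T+1}(x_k,x_{T+1})$ together with the explicit formula for the path elements from Theorem~\ref{the:path_elem}. This gives
\begin{equation*}
\begin{aligned}
\tilde{\psi}^f_{k}(x_k)\,\tilde{\psi}^b_{k}(x_k)
&= \Big( \max_{x_{1:k-1}} \psi_1(x_1)\prod_{t=2}^k \psi_{t-1,t}(x_{t-1},x_t) \Big) \\
&\quad \times \Big( \max_{x_{k+1:T}} \prod_{t=k+1}^{T+1} \psi_{t-1,t}(x_{t-1},x_t) \Big).
\end{aligned}
\end{equation*}
Since $\psi_{T,T+1}\equiv 1$ and the two maximizations are over disjoint sets of variables ($x_{1:k-1}$ versus $x_{k+1:T}$) that appear in disjoint groups of factors apart from the shared $x_k$, the product of the two maxima equals the maximum of the product; factoring the product of factors over $t=2,\ldots,T$ is routine. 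Hence $\tilde{\psi}^f_{k}(x_k)\,\tilde{\psi}^b_{k}(x_k) = \max_{\x\backslash x_k} \psi_1(x_1)\prod_{t=2}^T \psi_{t-1,t}(x_{t-1},x_t)$, which is exactly $p^*(x_k)$ of Eq.~\eqref{eq:maxima} applied to the Markov factorization Eq.~\eqref{eq:joint_dist_markov} (up to the constant $1/Z$, which is positive and independent of $x_k$). Taking $\argmax_{x_k}$ then recovers $x^*_k$.

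The main obstacle is the step exchanging the product of two independent maxima with a single joint maximum: one must argue carefully that because the inner maximizations range over variable sets that are disjoint from each other and each factor in $\psi_1(x_1)\prod_{t=2}^T\psi_{t-1,t}(x_{t-1},x_t)$ depends on variables living in only one of the two groups together with $x_k$, the identity $(\max_{u} f(u,x_k))(\max_{v} g(v,x_k)) = \max_{u,v} f(u,x_k)g(v,x_k)$ holds pointwise in $x_k$ (valid since all potentials are nonnegative). A secondary, minor point is justifying that $\argmax_{x_k}$ of the pointwise maximized joint equals $x^*_k$: this follows because $\max_{x_k} p^*(x_k) = \max_{x_{1:T}} p(y_{1:T},x_{1:T})$ and the unique global maximizer $x^*_{1:T}$ attains it, so its $k$-th component is the unique $\argmax$ in Eq.~\eqref{eq:theorem_va_global_optimum}. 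I would invoke Theorem~\ref{the:path_elem} and Corollary~\ref{cor:theorem_va} rather than re-deriving the recursions, keeping the proof short.
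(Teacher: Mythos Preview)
Your proof is correct, but it follows a different route from the paper's. The paper stays inside the algebraic machinery of the operator $\vee$: it writes $\tilde{a}_{0:T+1} = \tilde{a}_{0:k} \vee \tilde{a}_{k:T+1}$ (valid by associativity), invokes Corollary~\ref{cor:theorem_va} to identify the second component of $\tilde{a}_{0:T+1}$ with the global MAP path $x^*_{1:T}$, and then reads off from Definition~\ref{def:viterbi_operator} that the middle entry inserted by $\vee$ is $\hat{x}_k = \argmax_{x_k} A_{0:k}(x_0,x_k)\,A_{k:T+1}(x_k,x_{T+1}) = \argmax_{x_k}\tilde{\psi}^f_k(x_k)\tilde{\psi}^b_k(x_k)$, which must therefore equal $x^*_k$. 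You instead unpack $A_{0:k}$ and $A_{k:T+1}$ via Theorem~\ref{the:path_elem}, use the elementary identity $(\max_u f)(\max_v g)=\max_{u,v} fg$ for nonnegative functions of disjoint variables, and identify the product with $p^*(x_k)$ directly. Your argument is more self-contained and does not need the path component $\hat{X}$ of $\tilde{a}$ or Corollary~\ref{cor:theorem_va} at all; the paper's argument is shorter once that machinery is in place and highlights that the result is really a structural consequence of associativity rather than a separate computation. Either way the uniqueness assumption on the MAP path is what pins down $x^*_k$ as the unique $\argmax$.
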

\begin{proof}
See Appendix~\ref{sec:theorem_va_global_optimum}.
\end{proof}
Theorem~\ref{th:theorem_va_global_optimum} follows from the generalized MAP estimate for an arbitrary graph discussed in Ref.~\cite[ch. 13]{Koller:2009}. However, we compute the elements of these forward and backward potentials in parallel. 

Let us now define an element $\bar{a}_{i:j}$ that consists of the upper part of $\tilde{a}_{i:j}$ where the path is left out. These elements can be computed without actually storing the paths $x^*_{i:j}$ which provides a computational advantage.

\begin{definition}
\label{def:viterbi_operator_for_max_product}
For two elements $\bar{a}_{i:j} = A_{i:j}(x_i, x_j)$ and $\bar{a}_{j:k} = A_{j:k}(x_j, x_k)$, the binary operator $\vee$ can be defined as
\begin{equation}
    \begin{split}
\bar{a}_{i:k} &=  \bar{a}_{i:j} \vee \bar{a}_{j:k},
    \end{split}
\end{equation}
where 
\begin{equation}
    \begin{split}
\bar{a}_{i:k}  = \max_{x_j} A_{i:j}(x_i, x_j) A_{j:k}(x_j, x_{k}).
    \end{split}
\end{equation}
\end{definition}
The element $\bar{a}_{i:j}$ and the operator also inherit the associative property of the element $\tilde{a}_{i:j}$ and, therefore, we also have $\bar{a}_{i:k} = A_{i:k}(x_i, x_k)$.

We can now compute the maximum forward and backward potentials in terms of these associative operators and elements, which is summarized in the following.

\begin{proposition}
The maximum forward potential can be computed as 
\begin{align*}
 \bar{a}_{0:k} &= \tilde{\psi}_{k}^f(x_{k}), \quad k > 0.
\end{align*}
\label{th:theorem_max_forward}
\end{proposition}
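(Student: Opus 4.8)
The plan is to prove Proposition~\ref{th:theorem_max_forward} by an induction on $k$ that exactly mirrors the proof of Theorem~\ref{th:theorem_forward_potentials}, with the summation $\sum_{x_j}$ replaced everywhere by $\max_{x_j}$. The key observation that makes this work is the remark preceding the proposition: the element $\bar{a}_{i:j}$ and the operator $\vee$ inherit associativity (Lemma~\ref{lem:vee_assoc} applied to the stripped-down element), so it suffices to evaluate the left-fold
\begin{equation*}
\bar{a}_{0:k} = \bar{a}_{0:1} \vee \bar{a}_{1:2} \vee \cdots \vee \bar{a}_{k-1:k}
\end{equation*}
one step at a time and identify it with $\tilde{\psi}_k^f(x_k)$.

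First I would check the base case $k=1$: by Eq.~\eqref{eq:def_viterbi_first_elem} we have $\bar{a}_{0:1} = A_{0:1}(\cdot,x_1) = \psi_1(x_1)$, which equals $\tilde{\psi}_1^f(x_1)$ by the initial condition stated in the preceding lemma. Then for the inductive step I assume $\bar{a}_{0:k-1} = \tilde{\psi}_{k-1}^f(x_{k-1})$ and apply $\vee$ with $\bar{a}_{k-1:k} = A_{k-1:k}(x_{k-1},x_k) = \psi_{k-1,k}(x_{k-1},x_k)$ on the right. By Definition~\ref{def:viterbi_operator_for_max_product},
\begin{equation*}
\bar{a}_{0:k-1} \vee \bar{a}_{k-1:k} = \max_{x_{k-1}} A_{0:k-1}(x_0,x_{k-1})\, A_{k-1:k}(x_{k-1},x_k) = \max_{x_{k-1}} \tilde{\psi}_{k-1}^f(x_{k-1})\, \psi_{k-1,k}(x_{k-1},x_k),
\end{equation*}
which is precisely the recursion for $\tilde{\psi}_k^f(x_k)$ from the lemma immediately above the proposition. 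This closes the induction. Alternatively, and even more briefly, I could just invoke Theorem~\ref{the:path_elem} (or its stripped-down analogue): putting $i=0$, $j=k$ gives $\bar{a}_{0:k} = A_{0:k}(x_0,x_k)$, and by definition $\tilde{\psi}_k^f(x_k) = A_{0:k}(x_0,x_k)$, so the two coincide immediately.

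There is essentially no hard part here — the result is a direct corollary of material already established (associativity of $\vee$ on $\bar{a}$, Theorem~\ref{the:path_elem}, and the recursion lemma for $\tilde{\psi}^f$). The only point requiring a word of care is to confirm that dropping the path component $\hat{X}_{i:j}$ genuinely does not affect the recursion for the probability component $A_{i:j}$, i.e.\ that the $\max$ in the operator definition depends only on the $A$-parts; this is immediate from inspecting Eq.~\eqref{eq:viterbi_a_comb}, where the first coordinate of the output is $\max_{x_j} A_{i:j}(x_i,x_j) A_{j:k}(x_j,x_k)$ with no reference to the path coordinates. Hence the induction goes through verbatim and the proof is short. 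I would likely defer the detailed write-up to an appendix, as the paper does for the analogous Theorem~\ref{th:theorem_forward_potentials}-style results, or simply state that it follows by the same induction as Theorem~\ref{th:theorem_forward_potentials} with $\sum$ replaced by $\max$.
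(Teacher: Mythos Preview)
Your proposal is correct, and your ``alternative'' route is exactly what the paper does: its entire proof is the one line ``This follows from Theorem~\ref{the:path_elem},'' i.e.\ set $i=0$, $j=k$ to get $\bar{a}_{0:k}=A_{0:k}(x_0,x_k)=\tilde{\psi}_k^f(x_k)$ by definition. Your primary induction argument is also valid but is strictly more work than needed, since Theorem~\ref{the:path_elem} has already done that induction for you.
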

\begin{proof}
This follows from Theorem~\ref{the:path_elem}.
\end{proof}
\begin{proposition}
The maximum backward potential can be computed as
\begin{align*}
 \bar{a}_{k:T+1} &= \tilde{\psi}_{k}^b(x_{k}), \quad k > 0.
\end{align*}
\label{th:theorem_max_backward}
\end{proposition}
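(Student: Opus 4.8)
The plan is to mirror exactly the argument used for Proposition~\ref{th:theorem_max_forward}, only with the roles of the endpoints reversed so that we accumulate from the right instead of from the left. Concretely, I would note that by Theorem~\ref{the:path_elem} applied with $i=k$ and $j=T+1$ we have
\begin{equation*}
A_{k:T+1}(x_k,x_{T+1}) = \max_{x_{k+1:T}} \prod_{t=k+1}^{T+1} \psi_{t-1,t}(x_{t-1},x_t),
\end{equation*}
and since $\psi_{T,T+1}(x_T,x_{T+1})\triangleq 1$ by Eq.~\eqref{eq:definition_potentials}, this equals $\max_{x_{k+1:T}}\prod_{t=k+1}^{T}\psi_{t-1,t}(x_{t-1},x_t)$, which is by definition the maximum probability of a path starting at $x_k$, i.e.\ $\tilde{\psi}_k^b(x_k)$. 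By the remark following Definition~\ref{def:viterbi_operator_for_max_product}, $\bar{a}_{k:T+1}=A_{k:T+1}(x_k,x_{T+1})$, so the two sides coincide.

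Alternatively, and perhaps more in the spirit of how the analogous sum-product statement (Theorem~\ref{th:theorem_backward_potentials}) was handled, I would give a short induction on $k$ decreasing from $T$. The base case $k=T$ holds because $\bar{a}_{T:T+1}=A_{T:T+1}(x_T,x_{T+1})=\psi_{T,T+1}\triangleq 1=\tilde{\psi}_T^b(x_T)$ by the initial condition in the preceding lemma. For the inductive step, assuming $\bar{a}_{k+1:T+1}=\tilde{\psi}_{k+1}^b(x_{k+1})$, I apply the operator $\vee$ with $\bar{a}_{k:k+1}=A_{k:k+1}(x_k,x_{k+1})=\psi_{k,k+1}(x_k,x_{k+1})$ on the left:
\begin{equation*}
\bar{a}_{k:k+1}\vee\bar{a}_{k+1:T+1} = \max_{x_{k+1}} \psi_{k,k+1}(x_k,x_{k+1})\,\tilde{\psi}_{k+1}^b(x_{k+1}) = \tilde{\psi}_k^b(x_k),
\end{equation*}
where the last equality is precisely the backward recursion in the lemma just before Theorem~\ref{th:theorem_va_global_optimum}. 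Associativity of $\vee$ (Lemma~\ref{lem:vee_assoc}, inherited by the reduced element per the remark after Definition~\ref{def:viterbi_operator_for_max_product}) guarantees that $\bar{a}_{k:T+1}=\bar{a}_{k:k+1}\vee\bar{a}_{k+1:k+2}\vee\cdots\vee\bar{a}_{T:T+1}$ regardless of the association order, so this completes the induction.

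There is essentially no hard part here: the statement is the exact backward counterpart of Proposition~\ref{th:theorem_max_forward}, and all the machinery (associativity of $\vee$, the characterization of $A_{i:j}$ from Theorem~\ref{the:path_elem}, and the backward max-recursion) has already been established. The only minor point to be careful about is bookkeeping with the boundary potential $\psi_{T,T+1}\triangleq 1$ and the purely notational dependence on $x_{T+1}$, so that the product telescopes correctly down to $\tilde{\psi}_k^b$. I would therefore present the proof as a one-line appeal to Theorem~\ref{the:path_elem} (paralleling the one-line proof given for Proposition~\ref{th:theorem_max_forward}), optionally pointing to Appendix-style details analogous to the proof of Theorem~\ref{th:theorem_backward_potentials} for the reader who wants the explicit induction.
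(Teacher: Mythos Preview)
Your proposal is correct and matches the paper's own proof, which simply reads ``This follows from Theorem~\ref{the:path_elem}.'' Your primary argument---specializing Theorem~\ref{the:path_elem} to $i=k$, $j=T+1$ and using $\psi_{T,T+1}\triangleq 1$---is exactly the intended one-line justification, and the optional induction you sketch is a valid (if unnecessary) elaboration.
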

\begin{proof}
This follows from Theorem~\ref{the:path_elem}.
\end{proof}

Similarly to the sum-product algorithm in Section~\ref{sec:par_sum_product}, we can now parallelize the max-product algorithm. The steps are summarized in  Algorithm~\ref{alg:par_max_prodcut}. The span and work complexities of the algorithm are summarized in the following proposition.

\begin{algorithm}[H]
\begin{algorithmic}[1]
  \REQUIRE The potentials $\psi_k(\cdot)$, $k=1,\ldots,T$ and the operator $\vee$, see Definition~\ref{def:viterbi_operator_for_max_product}.
  \ENSURE The Viterbi path $x^*_{1:T}$.
  \FOR[In parallel]{$k\gets1$ \textbf{to} $T$}
  \STATE Initialize $\bar{a}_{k-1:k}$.
  \ENDFOR
  \STATE Run parallel-scan to get $\bar{a}_{0:k} = \tilde{\psi}_{1,k}^f(x_{k})$, $k=1,\ldots,T$.
  \FOR[In parallel]{$k\gets1$ \textbf{to} $T$}
  \STATE Initialize $\bar{a}_{k:k+1}$.
  \ENDFOR
  \STATE Run reversed parallel-scan to get $\bar{a}_{k:T+1} = \tilde{\psi}_{k,T}^b(x_{k})$, $k=1,\ldots,T$.
  \FOR[In parallel]{$k\gets1$ \textbf{to} $T$}
  \STATE Compute the optimal state $x_k^*$ using Eq.~\eqref{eq:theorem_va_global_optimum}.
  \ENDFOR
\end{algorithmic}
\caption{The parallel max-product algorithm.}
\label{alg:par_max_prodcut}
\end{algorithm}

\begin{proposition}
The span complexity of the parallel max-product algorithm (Algorithm~\ref{alg:par_max_prodcut}) is $O(\log T)$ and the work complexity is $O(T)$.
\end{proposition}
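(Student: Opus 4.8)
The plan is to mirror exactly the argument already given for the parallel sum-product algorithm (the Proposition proved right after Algorithm~\ref{alg:par_sum_prodcut}), since Algorithm~\ref{alg:par_max_prodcut} has the same structure: a collection of embarrassingly parallel loops interleaved with two invocations of the parallel-scan algorithm. First I would observe that the three \textbf{for} loops in Algorithm~\ref{alg:par_max_prodcut} — initializing the forward elements $\bar{a}_{k-1:k}$, initializing the backward elements $\bar{a}_{k:k+1}$, and computing the optimal states $x_k^*$ via Eq.~\eqref{eq:theorem_va_global_optimum} — each consist of $T$ independent tasks, so each can be executed in $O(1)$ span and $O(T)$ work (the per-task cost depending on $D$, which we absorb as in the remark following the sum-product proposition).

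Next I would invoke Lemma~\ref{lem:vee_assoc} (and the inherited associativity of $\bar{a}$ noted after Definition~\ref{def:viterbi_operator_for_max_product}) to justify that the two scan steps are legitimate applications of the parallel-scan algorithm of Algorithm~\ref{alg:parprefix}: the forward scan produces $\bar{a}_{0:k}=\tilde{\psi}^f_{k}(x_k)$ by Proposition~\ref{th:theorem_max_forward}, and the reversed scan produces $\bar{a}_{k:T+1}=\tilde{\psi}^b_{k}(x_k)$ by Proposition~\ref{th:theorem_max_backward}. Since the parallel-scan algorithm has span $O(\log T)$ and work $O(T)$ in terms of operator applications, and the one neutral element needed (for the max-product operator, the constant potential $1$) is available, both scan passes inherit these bounds.

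Finally, composing the pieces: the total span is the sum of $O(1)$ (init forward) $+\ O(\log T)$ (forward scan) $+\ O(1)$ (init backward) $+\ O(\log T)$ (backward scan) $+\ O(1)$ (compute $x_k^*$), which is $O(\log T)$; the total work is $O(T)+O(T)+O(T)+O(T)+O(T)=O(T)$. This establishes both claims.

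I do not expect any real obstacle here: the statement is essentially a bookkeeping corollary of the structural decomposition already carried out, together with the associativity lemma and the two max-potential propositions, all of which are available earlier in the paper. The only point requiring a word of care is the same one flagged in the remark after the sum-product proposition — that the constants hide a dependence on the number of states $D$ and on how much parallelism is exploited inside each operator application — but this does not affect the stated $O(\log T)$ span and $O(T)$ work in the idealized model, so the proof can simply reuse that remark verbatim.
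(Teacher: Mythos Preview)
Your proposal is correct and follows essentially the same approach as the paper's own proof, which simply notes that the initialization and final-combination loops have $O(1)$ span and $O(T)$ work while the two parallel-scan passes have $O(\log T)$ span and $O(T)$ work, and concludes. Your version is more detailed (explicitly invoking Lemma~\ref{lem:vee_assoc}, Propositions~\ref{th:theorem_max_forward}--\ref{th:theorem_max_backward}, and the remark about dependence on $D$), but the underlying argument is identical.
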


\begin{proof}
The initializations and final state combinations have span complexities $O(1)$ and work complexities $O(T)$, whereas the parallel-scans have span complexities $O(\log T)$ and work complexities $O(T)$. Hence the result follows.
\end{proof}

\section{Extensions}
\label{sec:extensions}

In this section, we discuss extensions of the parallel-scan framework for the inference in HMMs.
\subsection{Generic associative operations}

We defined the parallel-scan framework in terms of sum-product and max-product for HMMs. We can easily extend this framework to operations of the form expressed by Eq.~\eqref{eq:general_operation} for arbitrary associative operators. In particular, we can also consider continuous-state Markov processes; in this case, the operator becomes integration and we get similar algorithms to the ones described in~\cite{sarkka2020temporal}, except that the smoother will be a two-filter smoother instead of a Rauch--Tung--Striebel smoother. In particular, for linear Gaussian systems, we get a parallel version of the two-filter Kalman smoother.

\subsection{Block-wise operations}
In this article, we have restricted our consideration to pair-wise Markovian elements, that is, we assign a single observation and state to a single element. However, it is possible to perform binary association operations in parallel on a block level, where we assign a single computational element to a set of consecutive observations and states~\cite{sarkka2020temporal}. In other words, we can define a block of consecutive $l$ observations and states as a single element in the parallel framework. This single element processes a block of measurements before combining the results with other elements. This kind of block-processing can be advantageous when the number of computational cores is limited.

\subsection{Parameter estimation}
Another task of the HMM inference is to estimate parameters of the model. One possible solution is to use the \textit{Baum-Welch algorithm} (BWA)~\cite{baum1970maximization}, which is a special case of \textit{expectation--maximization} (EM) algorithm~\cite{dempster1977maximum}. In expectation step, BWA uses the forward-backward algorithm, which can be parallelized using the methods proposed in this article.

\section{Experimental results}
\label{sec:experimental_results}

For our experiment, we consider the Gilbert--Elliott (GE) channel model~\cite{cappe2006inference}, which is a classical model used in the transmission of signals in digital communication channels. The model describes the burst error patterns in communication channels and simulates the error performance of the communication link. This model consists of two hidden Markov states $b_k$ and $s_k$, where $b_k$ represents the binary input signal to be transmitted and $s_k$ represents the channel regime binary signal. It is assumed that the input signal $b_k$ may be flipped by an independent error, which can be modeled as $y_k = b_k \oplus v_k$. Here, $y_k$ is the measurement signal, which is observable. Furthermore, $v_k \in \{0, 1\}$ is a Bernoulli sequence, and $\oplus$ is the exclusive-or operation. The exclusive-or operation ensures that $y_k = b_k$ if $v_k = 0$, and $y_k \neq b_k$ if $v_k = 1$.

The regime input signal $s_k$ is modeled as a two-state Markov chain that represents high and low error conditions of the channel. More specifically, if an error occurs ($v_k = 1$), the probability of error has either a small value ($q_0$) or a large value ($q_1$). Moreover, we denote the probability of transition from a high error state ($s_{k-1} = 1$) to a low error state ($s_{k} = 0$) as $p_0$. Similarly, we present the probability of transition from a low error state ($s_{k-1} = 0$) to a high error state ($s_{k} = 1$) as $p_1$. We also denote the state switch probability of $b_k$ as $p_2$.

\begin{figure}
\centerline{\includegraphics[width=\columnwidth]{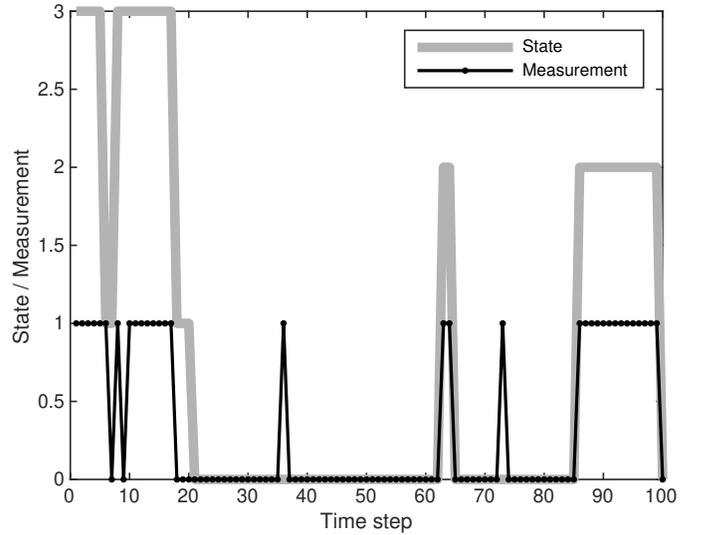}}
\caption{An example of states and measurements from the Gilbert--Elliot hidden Markov model with the number of time steps $T=100$.}
\label{fig:ex_data}
\end{figure}

\begin{figure*}[!ht]
\normalsize
\begin{equation}
\begin{split}
  \Pi &= \begin{pmatrix}
       (1-p_0) \, (1-p_2) & p_0 \, (1-p_2) & (1-p_0) \, p_2 & p_0 \, p_2 \\
           p_1 \, (1-p_2) & (1-p_1) \, (1-p_2) & p_1 \, p_2 & (1-p_1) \, p_2 \\
          (1-p_0) \, p_2 & p_0 \, p_2 & (1-p_0) \, (1-p_2) & p_0 \, (1-p_2) \\
           p_1 \, p_2 & (1-p_1) \, p_2 & p_1 \, (1-p_2) & (1-p_1) \, (1-p_2)
           \end{pmatrix}, \quad
  O = \begin{pmatrix}
           (1-q_0) & q_0 \\
           (1-q_1) & q_1 \\
            q_0  & (1-q_0) \\
            q_1  & (1-q_1) \\
           \end{pmatrix}.
\end{split}
\label{eq:gilbert_elliot_priors}
\end{equation}
Here, 
\begin{equation*}
\Pi = p(x_k \mid x_{k-1})    
\end{equation*}
and
\begin{equation*}
O = p(y_k \mid x_{k}).    
\end{equation*}
\hrulefill
\vspace*{4pt}
\end{figure*}

In order to recover the hidden states, we use the joint model $x_k = (s_k, b_k)$ which is a 4-state Markov chain ($D=4$) consisting of the states $\{(0, 0), (0, 1), (1, 0), (1, 1)\}$. These states are encoded as $x_k \in \{0, 1, 2, 3\}$. The transition matrix $\Pi$ and the observation model $O$, which encode the information in $p(x_k \mid x_{k-1})$ and $p(y_k \mid x_{k})$, respectively, are given in Eq.~\eqref{eq:gilbert_elliot_priors}. An example of the states and the corresponding measurements from the GE model is shown in Fig.~\ref{fig:ex_data}.

To evaluate the performance of our proposed methods, we simulated the states and measurements with varying lengths of time series ranging from $T=10^2$ to $T=10^5$ and averaged the run times (10 repetitions for sequential methods and 100 for parallel ones). It is worth noting that we evaluate the methods only in terms of computational time (not error performance), because the parallel and sequential methods are algebraically equivalent and, therefore, there is no difference in their error performance.

In the experimental setup, we used the open-source TensorFlow 2.4 software library \cite{tensorflow2015-whitepaper}. The library natively implements the vectorization and associative scan primitives that can be used to implement the methodology for both CPUs and GPUs. In this experiment, we set the values of parameters of the GE model given in Eq.~\eqref{eq:gilbert_elliot_priors} as follows: $p_0 = 0.03, p_1 = 0.1, p_2 = 0.05, q_0 = 0.01, q_1 = 0.1$. The initial prior for the state is uniform, $p(x_1) = 0.25$ for $ x_1 \in  \{0, 1, 2, 3\}$. We ran the sequential and parallel Bayesian smoothers (BS-Seq, BS-Par)~\cite{sarkka2013bayesian}, sequential and parallel sum-product based smoothers  (SP-Seq, SP-Par) from Section~\ref{sec:sum-product}, and sequential and parallel max-product based MAP estimators (MP-Seq, MP-Par) from Section~\ref{sec:max-product} on both CPU and GPU. We also ran the classical Viterbi algorithm (see Algorithm~\ref{alg:vb_algorithm}) for comparison. The experiment was carried both on a CPU, AMD Ryzen$\textsuperscript{TM}$ Threadripper$\textsuperscript{TM}$ 3960X with 24 Cores and 3.8 GHz, and a GPU, NVIDIA$\textsuperscript{\textregistered}$ Ampere$\textsuperscript{\textregistered}$ GeForce RTX 3090 (GA102) with 10496 cores.

\begin{figure}
\centerline{\includegraphics[width=\columnwidth]{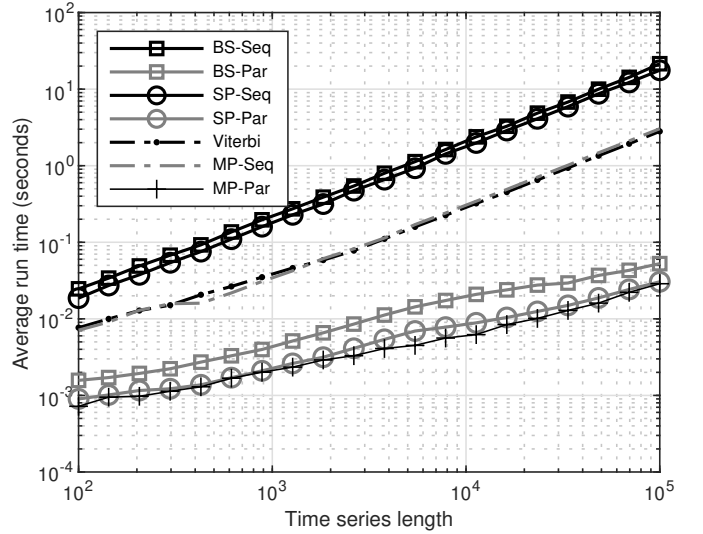}}
\caption{Average computation times on the CPU for sequential and parallel Bayesian smoothers (BS-Seq, BS-Par), sequential and parallel sum-product algorithms (SP-Seq, SP-Par), sequential and parallel max-product algorithms (MP-Seq, MP-Par), and the classical Viterbi algorithm (Viterbi).}
\label{fig:cpu}
\end{figure}

\begin{figure}
\centerline{\includegraphics[width=\columnwidth]{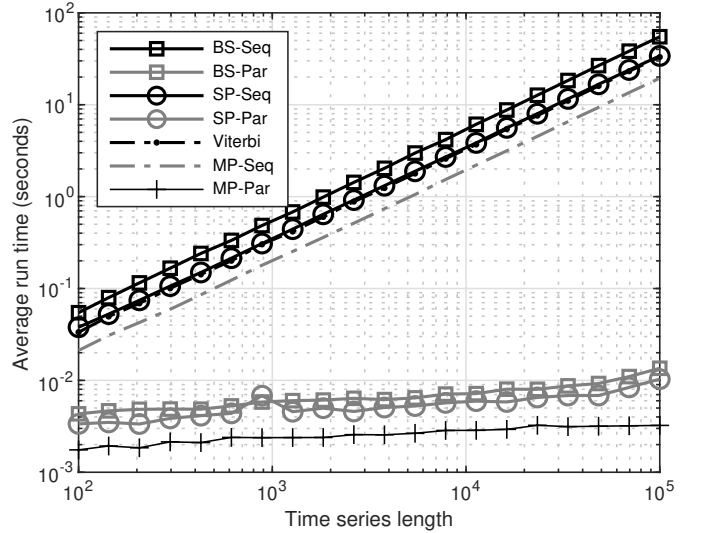}}
\caption{Average computation times on the GPU for sequential and parallel Bayesian smoothers (BS-Seq, BS-Par), sequential and parallel sum-product algorithms (SP-Seq, SP-Par), sequential and parallel max-product algorithms (MP-Seq, MP-Par), and the classical Viterbi algorithm (Viterbi).}
\label{fig:gpu}
\end{figure}

Average run times are shown in Figs.~\ref{fig:cpu} and~\ref{fig:gpu}. It is clear that the parallel algorithms are computationally faster than the sequential versions both on the CPU and the GPU. Although the difference is much more pronounced on the GPU, even on the CPU, we can see a benefit of parallelization. Nevertheless, the number of computational cores is orders of magnitude smaller than in the GPU (24 vs.\ 10496 cores). Among the compared methods, the max-product-based parallel method is the fastest, sum-product-based parallel method is the second, and the parallel Bayesian smoother is the third, on both the CPU and the GPU. A similar order can be seen among the sequential method results and the classical Viterbi is placed between the other sequential methods.

\begin{figure}
\centerline{\includegraphics[width=\columnwidth]{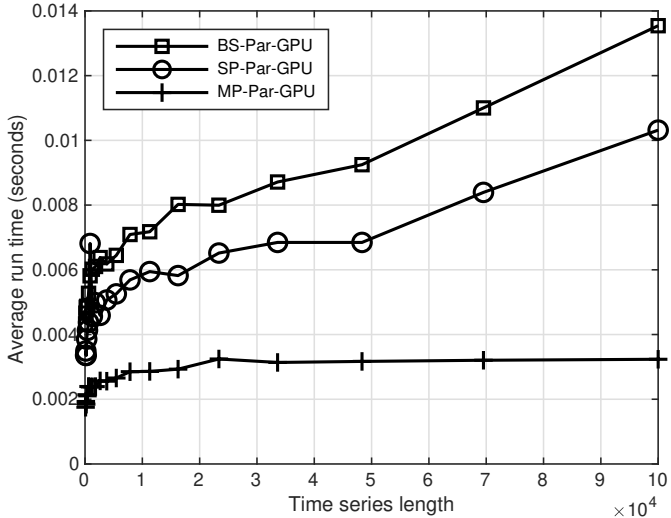}}
\caption{Average computation times on the GPU for parallel Bayesian smoothers (BS-Par-GPU), parallel sum-product algorithm (SP-Par-GPU), and parallel max-product algorithm (MP-Par-GPU).}
\label{fig:gpu_par}
\end{figure}

Average run times of the parallel methods on the GPU (on a linear scale) are separately shown in Fig.~\ref{fig:gpu_par}. From the figure, it can be seen that the computational times initially grow logarithmically as is predicted by the theory and then, for the Bayesian smoother and the sum-product method retain back to linear when the time series length becomes longer than $\sim5 \times 10^4$. However, the max-product method remains sub-linear even beyond the time series length of $10^5$, which is likely due to less computational operations needed. It is worth mentioning that the mean absolute error between Bayesian smoothers and sum-product based smoothers is insignificant $(\leq10^{-16})$. This difference is due to numerical inaccuracies. The same conclusion can be drawn for max-product based MAP estimators and the Viterbi algorithm.

\begin{figure}
\centerline{\includegraphics[width=\columnwidth]{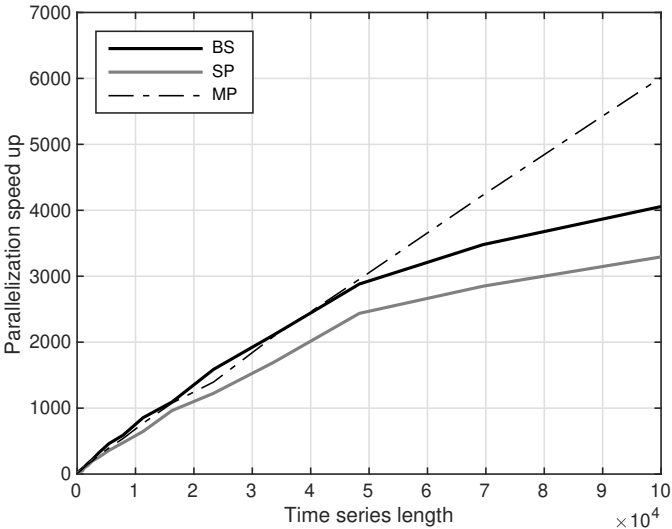}}
\caption{Ratio of the average run times of the sequential methods (BS) and the corresponding parallel methods (SP, MP) on the GPU.}
\label{fig:speedup}
\end{figure}

Finally, Fig.~\ref{fig:speedup} shows the ratio of the average run times of the sequential methods and the corresponding parallel methods on the GPU. It can be seen that with time series length of $\sim5 \times 10^4$, the speed-up is already between 2000--3000 and with time series of length $10^5$, the speed-up in the max-product method is $\sim6000$. On the other hand, with the Bayesian smoother and sum-product methods, the speed-up is $\sim$ 3000--4000.

\section{Conclusion}
\label{sec:conclusion}

In this paper, we have proposed a parallel formulation of HMM inference. We have considered the computation of both the marginal smoothing distributions as well as the MAP estimate. The proposed formulation enables efficient parallel computation is HMMs by reducing the time complexity from linear to logarithmic. The algorithms are based on reformulating the HMM inference problems in terms of associative operators which can be parallelized using the parallel-scan algorithm. We also showed the practical advantage of our proposed methods with experiments in multi-core CPU and GPU.

\appendix

\subsection{Proof of Lemma~\ref{lem:otimes_assoc}}
\label{sec:proof_of_assoc_fs}
In this section, we prove the associative property of the operator $\otimes$ stated in Lemma~\ref{lem:otimes_assoc}. For this, we need to prove that for three general elements $a_{i:j}$, $a_{j:k}$, $a_{k:l}$, the following statement holds
\begin{equation}
\begin{aligned}
    &\left(a_{i:j} \,\otimes\,  a_{j:k}\right)\, \otimes \, a_{k:l}  = a_{i:j}\, \otimes\,  \left(a_{j:k} \otimes a_{k:l}\right), \\
    &\quad \text{where} \, 0 \leq i < j < k < l.
    \end{aligned}
    \label{eq:associative_property_filtering_smoothing}
\end{equation}
We proceed to perform the calculations to the left-hand side of Eq.~\eqref{eq:associative_property_filtering_smoothing}
to check that they yield the same result as in the right-hand side, that is,
\begin{equation*}
\begin{aligned}
    &\left(a_{i:j} \,\otimes\,  a_{j:k}\right)\, \otimes \, a_{k:l}  \\
    &= \sum_{x_{k}} \left( \sum_{x_{j}} \psi_{i, j}(x_{i}, x_{j}) \, \psi_{j,k}(x_{j}, x_{k})\right) \, \psi_{k, l}(x_{k}, x_{l}) \\
    &= \sum_{x_{j}, x_{k}} \psi_{i, j}(x_{i}, x_{j}) \, \psi_{j,k}(x_{j}, x_{k}) \, \psi_{k, l}(x_{k}, x_{l}) \\
    &= \sum_{x_{j}} \psi_{i, j}(x_{i}, x_{j}) \, \left( \sum_{x_{k}} \psi_{j,k}(x_{j}, x_{k}) \, \psi_{k, l}(x_{k}, x_{l})\right) \\
    &= a_{i:j}\, \otimes\,  \left(a_{j:k} \otimes a_{k:l}\right),
\end{aligned}
\label{eq:associative_property_filtering_smoothing_proof}
\end{equation*}
which gives the result.

\subsection{Proof of Lemma~\ref{lem:vee_assoc}}
\label{sec:proof_of_assoc_viterbi}
In this section, we prove the associative property of the operator $\vee$ as stated in Lemma~\ref{lem:vee_assoc}. We need to prove that for three general elements $\tilde{a}_{i:j}$, $\tilde{a}_{j:k}$, $\tilde{a}_{k:l}$, the following statement holds
\begin{equation}
\begin{aligned}
    &(\tilde{a}_{i:j} \vee \tilde{a}_{j:k}) \vee \tilde{a}_{k:l} = \tilde{a}_{i:j} \vee (\tilde{a}_{j:k} \vee \tilde{a}_{k:l}),\\
    &\quad \text{where} \, 0 \leq i < j < k < l.
    \end{aligned}
    \label{eq:associative_property_va}
\end{equation}
From Definition~\ref{def:viterbi_operator}, we can write
\begin{equation}
    \tilde{a}_{i:j} \vee \tilde{a}_{j:k} = \begin{pmatrix}
    A_{i:k}(x_{i}, x_{k})\\
    \hat{X}_{i:k}(x_{i}, x_{k})
    \end{pmatrix},
\end{equation}
where
\begin{subequations}
\begin{align}
    A_{i:k}(x_{i}, x_{k}) &= \max_{x_j} A_{i:j}(x_{i}, x_{j}) A_{j:k}(x_{j}, x_{k}), \label{eq:va_lhs_A}\\
    \hat{X}_{i:k}(x_{i}, x_{k}) &= \big( \hat{X}_{i:j}(x_{i},  \hat{x}_{j}(x_{i}, x_j) ), \nonumber\\
    & \hat{x}_{j}(x_{i}, x_{j}), \hat{X}_{j:k}( \hat{x}_{j}(x_{i}, x_{j}), x_{k})) \big), \label{eq:va_lhs_X}
    \end{align}
\end{subequations}
and
\begin{equation}
    \hat{x}_{j}(x_{i}, x_{j}) = \argmax_{x_j} A_{i:j}(x_{i}, x_{j}) A_{j:k}(x_{j}, x_{k}).
\end{equation}

Now, combining the maximum probability of MAP estimates for the element $\tilde{a}_{k:l}$, we can write Eq.~\eqref{eq:va_lhs_A} as
\begin{equation}
\begin{aligned}
A_{i:l}(x_{i}, x_{l}) &= \max_{x_k} A_{i:k}(x_{i}, x_{k}) A_{k:l}(x_{k}, x_{l}) \\
&= \max_{x_k} \bigg[ \left\{\max_{x_j} A_{i:j}(x_{i}, x_{j}) A_{j:k}(x_{j}, x_{k})\right\} \\
&\quad \times A_{k:l}(x_{k}, x_{l}) \bigg] \\
&= \max_{x_j} \bigg[ A_{i:j}(x_{i}, x_{j}) \\
&\quad \times \left\{ \max_{x_k} A_{j:k}(x_{j}, x_{k}) A_{k:l}(x_{k}, x_{l})  \right\} \bigg] \\
&= \max_{x_j} A_{i:j}(x_{i}, x_{j}) A_{j:l}(x_{j}, x_{l}). \\
\end{aligned}
\label{eq:va_rhs_A}
\end{equation}
Now, combining the Viterbi path of the element $\tilde{a}_{k:l}$, we can write Eq.~\eqref{eq:va_lhs_X} as
\begin{equation}
\begin{aligned}
&\hat{X}_{i:l}(x_{i}, x_{l}) = \argmax_{x_k} A_{i:k}(x_{i}, x_{k}) A_{k:l}(x_{k}, x_{l}) \\
&= \argmax_{x_k} \bigg[ \left\{\argmax_{x_j} A_{i:j}(x_{i}, x_{j}) A_{j:k}(x_{j}, x_{k})\right\} \\
&\quad \times A_{k:l}(x_{k}, x_{l}) \bigg] \\
&= \argmax_{x_j} \bigg[ A_{i:j}(x_{i}, x_{j}) \\
&\quad \times \left\{ \argmax_{x_k} A_{j:k}(x_{j}, x_{k}) A_{k:l}(x_{k}, x_{l})  \right\} \bigg] \\
&= \argmax_{x_j} A_{i:j}(x_{i}, x_{j}) A_{j:l}(x_{j}, x_{l}). \\
\end{aligned}
\label{eq:va_rhs_X}
\end{equation}
From Eq.~\eqref{eq:va_rhs_A} and Eq.~\eqref{eq:va_rhs_X},
it follows that the operator $\vee$ is associative.
\subsection{Proof of Theorem~\ref{the:path_elem}}
\label{sec:proof_of_theorem_va}
In this section, we prove Theorem~\ref{the:path_elem} by induction. We first note that the theorem is true for $i=k$ and $j=k+1$. Furthermore, if the assertion is true for $i+1 < j$, we proceed to show that it holds for any $i$. By using Eq.~\eqref{eq:viterbi_a_comb} we get
\begin{equation}
    \begin{split}
    &A_{i:j}(x_{i}, x_{j}) \\ &=
    \max_{x_{i+1}} A_{i:i+1}(x_i, x_{i+1}) 
    A_{i+1:j}(x_{i+1}, x_j) \\
    &= \max_{x_{i+1}} \psi_{i,i+1}(x_{i}, x_{i+1})
    \max_{x_{i+2:j-1}} \prod_{t=i+2}^j \psi_{t-1,t}(x_{t-1}, x_{t}) \\
    &= \max_{x_{i+1:j-1}} \prod_{t=i+1}^j \psi_{t-1,t}(x_{t-1}, x_{t}).
    \end{split}
\end{equation}
We similarly get
\begin{equation}
    \begin{split}
    &\hat{X}_{i:j}(x_i, x_j) \\
    &= \bigg( \hat{X}_{i:i+1}(x_i, \hat{x}_{i+1}({x_i, x_j})), \hat{x}_{i+1}({x_i, x_j}), \\ &\qquad \hat{X}_{i+1:j}(\hat{x}_{i+1}({x_i, x_j}), x_j) \bigg) \\
    &= \left( \hat{x}_{i+1}({x_i, x_j}), \hat{X}_{i+1:j}(\hat{x}_{i+1}({x_i, x_j}), x_j) \right) \\
    &= \bigg( \argmax_{x_{i+1}} \psi_{i,i+1}(x_{i}, x_{i+1})
    \max_{x_{i+2:j-1}} \prod_{t=i+2}^j \psi_{t-1,t}(x_{t-1}, x_{t}), \\
    &\qquad \max_{x_{i+1}} \argmax_{x_{i+2:j-1}} \prod_{t=i+2}^j \psi_{t-1,t}(x_{t-1}, x_{t})
    \bigg) \\
    &= \argmax_{x_{i+1:j-1}} \prod_{t=i+1}^j \psi_{t-1,t}(x_{t-1}, x_{t}),
    \end{split} 
\end{equation}
which concludes the proof.

\subsection{Proof of Theorem~\ref{th:theorem_va_global_optimum}}
\label{sec:theorem_va_global_optimum}
In this section, we prove  Theorem~\ref{th:theorem_va_global_optimum}. That is, given the maximum forward probability $\tilde{\psi}^f_{k}(x_{k})$ of path ending at $x_{k}$, and the maximum backward probability $\tilde{\psi}^b_{k}(x_{k})$ of path starting at $x_k$, the MAP estimate at time step k is given by Eq.~\eqref{eq:theorem_va_global_optimum}.

Due to the associative property of the operator, we can  write
\begin{equation}
    \tilde{a}_{0:T+1}
    = \tilde{a}_{0:k} \vee \tilde{a}_{k:T+1}
    = \begin{pmatrix}
 \psi_1(x^*_1) \prod_{t=2}^T \psi_t(x^*_{t-1}, x^*_t)\\
 x^*_{1:T}
 \end{pmatrix},
\label{eq:a0Tp1}
\end{equation}
which is given by
\begin{equation}
    \begin{split}
&\tilde{a}_{0:k} \vee \tilde{a}_{k:T+1} \\
&= \begin{pmatrix}
\max_{x_k} A_{0:k}(x_0, x_k) A_{k:T+1}(x_k, x_{T+1})\\
\begin{split}
&\bigg( \hat{X}_{0:k}(x_0, \hat{x}_k({x_0, x_{T+1}}) ), \hat{x}_k({x_0, x_{T+1}}), \\
&\qquad \qquad \hat{X}_{k:T+1}(\hat{x}_k({x_0, x_{T+1}}), x_{T+1}) \bigg)
\end{split}
\end{pmatrix}.
    \end{split}
\end{equation}
Here, the dependence on $x_0$ and $x_{T+1}$ is only a notational expression and the term
\begin{equation}
    \begin{split}
& \hat{x}_k({x_0, x_{T+1}}) \\
&= \argmax_{x_k} A_{0:k}(x_0, x_k) A_{k:T+1}(x_k, x_{T+1}) \\
&= \argmax_{x_k} \tilde{\psi}^f(x_k) \, \tilde{\psi}^b(x_k)
    \end{split}
\end{equation}
has to be the $k$th element on the optimal path in order to match the right-hand side of Eq.~\eqref{eq:a0Tp1}.

\section*{Acknowledgment}
The authors would like to thank Academy of Finland for funding. The authors would also like to thank Adrien Corenflos for helping with the experiments.

\bibliographystyle{IEEEtran}
\bibliography{main}

\vfill

\begin{IEEEbiography}[{\includegraphics[width=1in,height=1.25in,clip,keepaspectratio]{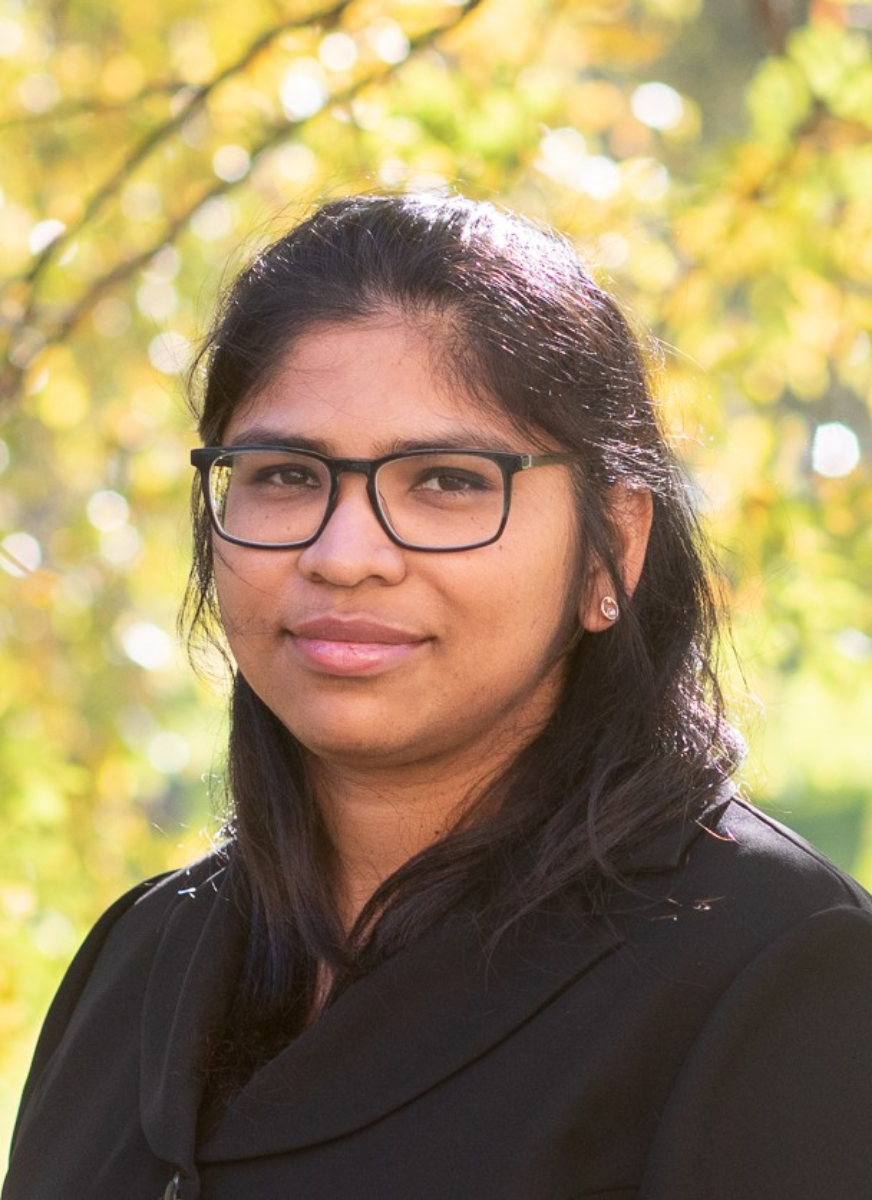}}]{Syeda Sakira Hassan} received her Master of Science (Tech.) degree (with distinction) in information technology, and Doctor of Science (Tech.) degree in computing and electrical engineering from Tampere University, Tampere, Finland, in 2013 and 2019, respectively. From 2019 to 2020 she worked in Basware Oyj, Finland as a Data Scientist. Currently, Dr. Hassan is a Postdoctoral Researcher in the Sensor Informatics and Medical Technology research group at Department of Electrical Engineering and Automation (EEA), Aalto University. She is also working as an AI Scientist with Silo.AI, Finland. Her research interests are machine learning, probabilistic modeling, dynamic programming, and optimal control theory.
\end{IEEEbiography}

\begin{IEEEbiography}[{\includegraphics[width=1in,height=1.25in,clip,keepaspectratio]{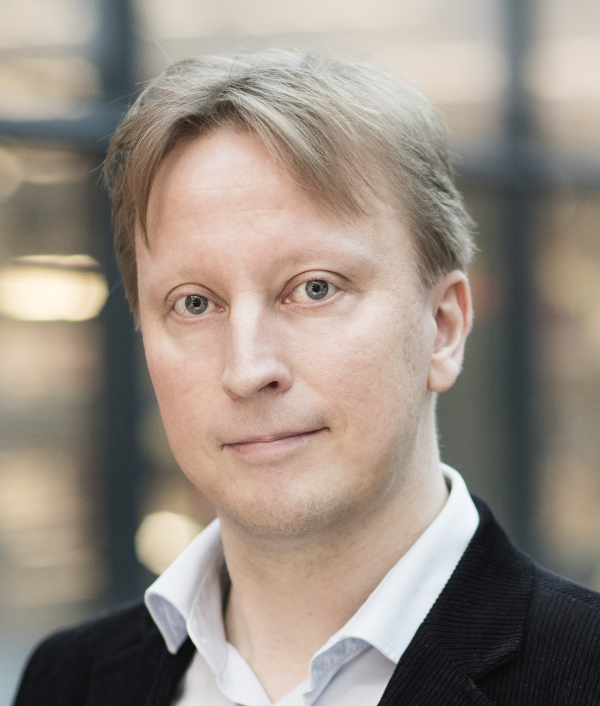}}]{Simo S\"arkk\"a} received his Master of Science (Tech.) degree (with distinction) in engineering physics and mathematics, and Doctor of Science (Tech.) degree (with distinction) in electrical and communications engineering from Helsinki University of Technology, Espoo, Finland, in 2000 and 2006, respectively. Currently, Dr. S\"arkk\"a is an Associate Professor with Aalto University, Technical Advisor of IndoorAtlas Ltd., and an Adjunct Professor with Tampere University of Technology and Lappeenranta University of Technology. His research interests are in multi-sensor data processing systems with applications in location sensing, health and medical technology, machine learning, inverse problems, and brain imaging. He has authored or coauthored over 150 peer-reviewed scientific articles and his books "Bayesian Filtering and Smoothing" and "Applied Stochastic Differential Equations" along with the Chinese translation of the former were recently published via the Cambridge University Press. He is a Senior Member of IEEE and serving as an Associate Editor of IEEE Signal Processing Letters.
\end{IEEEbiography}

\begin{IEEEbiography}[{\includegraphics[width=1in,height=1.25in,clip,keepaspectratio]{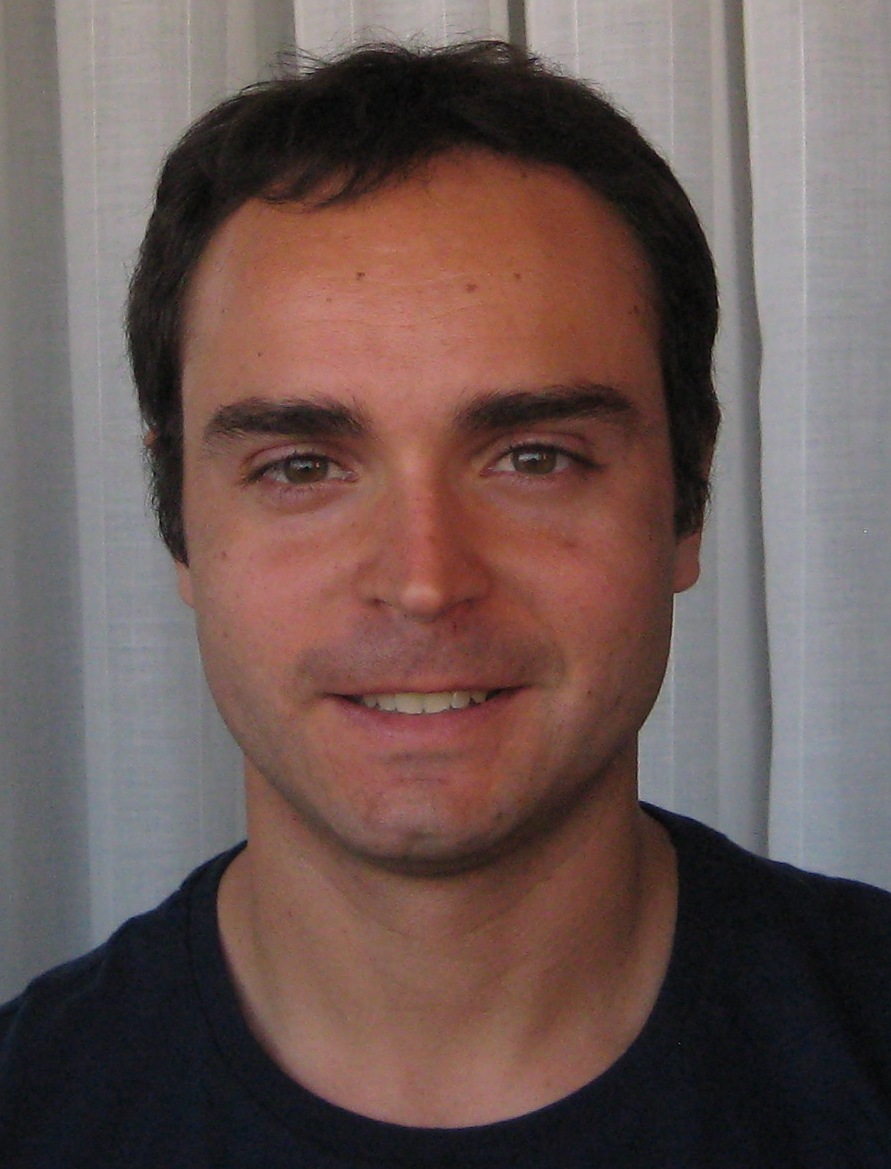}}]{\'Angel F. Garc\'ia-Fern\'andez} received the telecommunication engineering degree (with honours) and the Ph.D. degree from Universidad Polit\'ecnica de Madrid, Madrid, Spain, in 2007 and 2011, respectively. He is currently a Lecturer in the Department of Electrical Engineering and Electronics at the University of Liverpool, Liverpool, UK. His main research activities and interests are in the area of Bayesian inference, with emphasis on nonlinear dynamic systems and multiple target tracking. He has received paper awards at the International Conference on Information Fusion in 2017 and 2019.
\end{IEEEbiography}

\end{document}